\numberwithin{equation}{section}
\tikzset{brauer/.style={
node distance=2em,
font=\footnotesize,
every node/.style = {circle, fill=black, minimum size=.2em, inner sep=0pt},
baseline={([yshift=-.5ex]current bounding box.center)}
}}
\DeclareOldFontCommand{\bf}{\normalfont\bfseries}{\mathbf}
\definecolor{myBlue}{RGB}{1,1,141}
\newtheoremstyle{plain}
{}{}{\itshape}{}{\bfseries}{.}{.5em}{}
\theoremstyle{plain}    % kursive Schrift
\newtheorem{theorem}{Theorem}
\newtheorem*{theorem*}{Theorem}
\newtheorem{lem}[theorem]{Lemma}
\newtheorem{cor}[theorem]{Corollary}
\newtheorem{definition}[theorem]{Definition}
\newcommand{\cB}{\mathcal{B}}
\newcommand{\cD}{\mathcal{D}}
\newcommand{\cG}{\mathcal{G}}
\newcommand{\cS}{\mathcal{S}}
\newcommand{\fS}{\mathfrak{S}}
\newcommand{\fb}{\mathfrak{b}}
\newcommand{\vM}{\vec{M}}
\newcommand{\vcG}{\vec{\mathcal{G}}}
\DeclareMathOperator{\sgn}{sgn}
\author[1]{H. Keppler\,\protect\orcidlink{0000-0001-5877-6686}}
\affil[1]{\normalsize\itshape 
	Heidelberg University, Institut f{\"u}r Theoretische Physik, Philosophenweg 19, 69120 Heidelberg, Germany, EU
	\authorcr \hfill}
\author[2]{T. Krajewski\,\protect\orcidlink{0000-0001-8499-3000}}
\affil[2]{\normalsize\itshape Aix Marseille Univ, Université de Toulon, CNRS, CPT, Marseille, France, EU \authorcr \hfill}
\author[3]{T. Muller\,\protect\orcidlink{0000-0003-2024-8110}}
\affil[3]{\normalsize\itshape 
	Université de Bordeaux, LaBRI CNRS UMR 5800, Talence, France, EU
	\authorcr \hfill}
\author[3,4,5]{A. Tanasa\,\protect\orcidlink{0000-0003-1671-5652}}
\affil[4]{\normalsize\itshape DFT, H. Hulubei Nat. Inst. Phys. Nucl. Engineering, Magurele,  Magurele, Romania, EU \authorcr \hfill}
\affil[5]{\normalsize\itshape Université Sorbonne Paris Nord, LIPN, CNRS UMR 7030, Villetaneuse, France, EU}
\begin{document}
%------
% Title and abstract
%------
    \title{Duality of \boldmath$O(N)$ and $Sp(N)$ random tensor models: tensors with symmetries}
    \date{}

{\hypersetup{allcolors=black}
	\maketitle
	\begin{abstract}
	    In a recent series of papers, a duality between 
	    orthogonal and symplectic random tensor models has been proven, first for quartic models and then for models with interactions of arbitrary order. However, the tensor
	    models considered so far in the literature had  
	    no symmetry under permutation of the indices.
	    In this paper, we  generalize these results for tensors models with interactions of arbitrary order
	    which further have non-trivial symmetry under the permutation of the indices. %The case of t
	    Totally symmetric and anti-symmetric tensors are thus  treated as a particular case of our result. 
	    %The techniques used to prove our result are perturbative, 
	    %Our duality result is proven  graph by graph in perturbation theory.
	\end{abstract}
\microtypesetup{protrusion=false}
	\setcounter{tocdepth}{2}
	\tableofcontents
\microtypesetup{protrusion=true}
}
%------
% BODY OF THE PAPER 
%------

%%%%%%%%%%%%% Introduction %%%%%%%%%%%%%%%%%%%%%%%%%%%%
\section{Introduction}

Random tensor models (see the recent books \cite{gurau, tanasabook} or the reviews \cite{Gurau-invitation, Guraureview, TanasaSIGMA, Tanasa2012, GurauAIHPD}) are $0-$dimensional quantum field theoretical generalisations of the celebrated matrix models \cite{DiFrancesco}. Within this framework, they can be seen as
probability measures on tensor spaces; this is the point of view we take in this paper.

Tensor models have thus been used as tools to generate discrete random geometries in more than two dimensions.
Moreover, they have been further used to construct models similar to the holographic Sachdev-Ye-Kitaev model but without quenched disorder \cite{Witten:2016iux,Klebanov}, and new (\textit{melonic}) Conformal Field Theories \cite{Giombi:2017dtl,Bulycheva:2017ilt,Giombi:2018qgp,Klebanov:2018fzb,Gurau-TFT, Harribey:2022}.

Many of the original rigorous results on tensor models relied on the presence of a very large symmetry group (usual several distinct copies of $U(N)$ or $O(N)$) that forbids the tensor to have any symmetry under permutation of their indices \cite{Gurau-N,
TanasaMO, Bonzom:2012hw,Carrozza:2015adg,sabine,sylvan,Carrozza:2021qos,Dartois_2013,krajewski2023double}.
%These models go under the name of \textit{colored} tensor models, because the Feynman graphs of a tensor model of order $D$ can be represented as regular $D$--edge colored graphs. 
Later on, tensor models with tensors living on some non-trivial (mostly $O(N)$) representation where studied systematically \cite{sabine, sylvan, Carrozza:2021qos, Klebanov:2017, Gurau:2017,Carrozza}).

In \cite{Duality_Hannes, kepplermuller23} the authors studied tensor models with symplectic symmetry $Sp(N)$ in which case the tensor components sometimes are anticommuting (fermionic/odd graßmann) variables. 

Relations between the representations of $O(N)$ and $Sp(N)$ have a long history. thus, King \cite{King0} showed that the dimensions of irreducible representations of both groups agree, when exchanging symmetrization and antisymmetrization (transposed Young tableau) and replacing $N$ by $-N$. So called negative dimension theorems, or $N$ to $-N$ dualities, relating the orthogonal and symplectic group vial the formal relation $SO(-N)\simeq Sp(N)$ are well known \cite{King,Cvitanovic,parisi-sourlas,ramgoolam1994,Cvitanovicbook,Mkrtchyan-Veselov}
for matrix and vector models. 
Several incarnations of this relation can be found in the literature: for even $N$, $SO(N)$ and $Sp(N)$ gauge theories are known to be related by changing $N$ to $-N$ \cite{Mkrtchian}; a vector model with symplectic fermions in three space-time dimensions has been studied in \cite{LeClair} and an example of $SO(N)$ and $Sp(N)$ gauge theories with matter fields and Yukawa interactions can be found in \cite{Litim}; a duality between orthogonal and symplectic matrix ensembles (the $\beta=1,4$ ensembles) has been shown in \cite{goe-gse}.  
 
From a supergeometric or supersymmetric point of view such relations can be seen to arise naturally \cite{Dunne:1989}. 

%In this spirit we want to show technically 
As a natural followup of these matrix model results, we show in this paper how the $N$ to $-N$ symmetry arises in the tensor model case for tensors with interactions of arbitrary order, which further have non-trivial symmetry under the permutation of their indices. 
This result is a  
generalization of similar results obtained in simpler settings: for quartic interactions this was proven in \cite{Duality_Hannes} and for tensor models with interactions of arbitrary order this was done in \cite{kepplermuller23}. However, let us emphasize that, unlike the results of this paper, both the results of \cite{Duality_Hannes} and \cite{kepplermuller23} were obtained for tensor models that had no symmetry under the permutation of indices.

%\paragraph{Main result.} 
More precisely, the main result of this paper is the following.
We consider tensors of order $D$ that transform in some tensor representation $R$ of $O(N)$ or $Sp(N)$. This  
implies that the tensors may obey some non-trivial symmetry under permutation of their indices. In order to treat models with orthogonal and symplectic symmetry simultaneously, we introduce a grading parameter $\fb\in\{0,1\}$, such that $\fb=0$ corresponds to the $O(N)$ symmetric model and $\fb=1$ to the $Sp(N)$ symmetric one. The tensor components are real fermionic (anticommuting, odd) if $\fb=1$ and $D$ is odd, and real bosonic (commuting, even) otherwise.

\begin{definition}\label{def: intro}
The real graded tensor model with symmetry $R$ is defined by the measure
\begin{equation}
\begin{gathered}
d\mu[T] \simeq e^{-S[T]}\ \prod_{a_1,\dots,a_D} dT^{a_1\dots a_D} \; ,
 \\
S[T] = T^{a_\mathcal{D}} C^{-1}_{a_\mathcal{D} b_\mathcal{D}} T^{b_\mathcal{D}} + \sum_{\substack{\mathcal{S}\ \text{connected,} \\ |V(\mathcal{S})|>2}} \frac{\lambda_{\mathcal{S}}}{|V(\mathcal{S})|/D}\, I_{\mathcal{S}}(T) \; ,
\end{gathered}
\end{equation}
where $g^{\fb}_{a_cb_c}$ is the Kronecker $\delta_{a_cb_c} $ for $\fb=0$ or the canonical symplectic form $\omega_{a_cb_c}$ for $\fb=1$ and the sum runs over independent connected invariants $I_{\cS}(T)$ of order higher than two, indexed by undirected standed graphs $\cS$ (see Section~\ref{sec: model} for more details).
\end{definition}

The partition function $Z$ and the expectation value of an invariant $\langle{I_{\cS}(T)}\rangle$ are defined by:
\begin{equation}\label{eq: ZandI}
Z(\{\lambda\})=\int d\mu[T], \quad\text{and}\quad \langle{I_{\cS}}(T)\rangle(\{\lambda\} )=\frac{1}{Z} \int d\mu[T]\ I_{\cS}(T) \; ,
\end{equation}
and can be evaluated in perturbation theory.
The main theorem of this paper is:
\begin{theorem}\label{thm: main} 
The perturbative series of the partition function $Z$ and expectation values of invariants $\langle{I_{\cB}(T)}\rangle$ can be expressed as a formal sum over 2-colored stranded graphs $\cG$.
Each summand, corresponding to a specific graph $\cG$ (called the amplitude of that graph), writes as a product:
\begin{equation}\label{eq: amplitude}
K(\{\lambda\},\cG)\cdot \big((-1)^{\fb}N\big)^{F(\cG)} \;,
\end{equation}
of a term depending on $N$ and a term $K$, encoding both the dependence on the coupling constants $\lambda_\cS$  and some combinatorial factors associated to $\cG$  (see Section~\ref{sec: model} for the relevant definitions).
\end{theorem}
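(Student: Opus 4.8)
The plan is to set up the standard perturbative (Feynman) expansion of $Z$ and of $\langle I_\cB(T)\rangle$, and then to isolate the $N$-dependence face by face. First I would split the action as $S = S_2 + S_{\mathrm{int}}$, where $S_2$ is the quadratic part defining the covariance $C$ and $S_{\mathrm{int}}$ collects the higher invariants, and Taylor-expand $e^{-S_{\mathrm{int}}}$ in the coupling constants $\lambda_\cS$. Each coefficient is then a Gaussian (for $\fb=0$, or the bosonic case) or Berezin (for $\fb=1$ with $D$ odd, fermionic) integral of a product of invariants, which by Wick's theorem reduces to a sum over all pairings of the tensor fields $T$. For Grassmann variables the pairings carry the usual reordering signs, and I would record these signs as part of the combinatorial data, to be absorbed into $K$ later.

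Next I would translate each Wick pairing into a graph. Each invariant $I_\cS$ is drawn as a vertex carrying the stranded structure dictated by its index contractions, each propagator (the covariance $C$, built from the metric $g^\fb$) as a stranded edge, and the two kinds of edges are distinguished by a $2$-coloring, yielding the $2$-colored stranded graph $\cG$. In this picture the index contractions organize into closed strand loops, the faces counted by $F(\cG)$. The amplitude then factorizes: the vertex weights contribute the couplings $\lambda_\cS$, the symmetry factors $|V(\cS)|/D$, and any Grassmann signs --- all independent of $N$ --- which together form $K(\{\lambda\},\cG)$, while the entire $N$-dependence is concentrated in the product over faces.

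The decisive computation is to evaluate a single face. Along a closed strand the index is contracted by an alternating string of the metric $g^\fb$ and its inverse. For $\fb=0$ this is a chain of Kronecker deltas that closes to $\sum_a \delta_{aa}=N$; for $\fb=1$ the antisymmetry of $\omega$ forces one extra minus sign upon closing the loop, so that every face yields $-N$. In both cases a face contributes exactly $(-1)^\fb N$, and taking the product over all $F(\cG)$ faces gives the claimed factor $\big((-1)^\fb N\big)^{F(\cG)}$.

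The main obstacle I anticipate is precisely the interplay, in the symplectic case, between the per-face sign coming from the antisymmetry of $\omega$ and the reordering signs coming from the anticommuting tensor components, which must be shown to separate cleanly so that all fermionic signs land in the $N$-independent factor $K$ while only the loop signs build the $(-1)^\fb$ per face. The genuinely new difficulty, compared with the symmetry-free setting of \cite{Duality_Hannes, kepplermuller23}, is that the permutation symmetry encoded in the representation $R$ makes both the propagator and the vertices carry (anti)symmetrized index structures; I would therefore need to verify that the strand bookkeeping and the face count remain well defined and that each closed loop still contributes uniformly $(-1)^\fb N$ irrespective of $R$, presumably by working with the symmetrizer/antisymmetrizer projectors explicitly and checking that they do not alter the trace over a closed strand beyond the single overall sign.
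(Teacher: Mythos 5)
Your overall architecture (expand in the couplings, apply Wick's theorem, encode pairings as $2$-colored stranded graphs, extract the $N$-dependence face by face) is the same as the paper's, but the decisive step is asserted rather than proven, and as stated it is false. You claim that for $\fb=1$ ``the antisymmetry of $\omega$ forces one extra minus sign upon closing the loop, so that every face yields $-N$.'' It does not: contracting an alternating chain of $\omega$ and $\omega^{-1}$ around a closed strand gives $+N$ or $-N$ depending on how the strand orientations sit around the face (already for a face of length two, $\omega_{ab}\,\omega^{ab}=N$ while $\omega_{ab}\,\omega^{ba}=-N$). In the paper's terminology only the \emph{odd} faces (odd number of strands pointing one way around the cycle) pick up $(-1)^{\fb}$ from the metric, so the metric contractions alone contribute $(-1)^{\fb F_{odd}}N^{F}$, not $\bigl((-1)^{\fb}N\bigr)^{F}$.

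The missing $(-1)^{\fb}$ for each \emph{even} face must come from exactly the signs you propose to sweep into the $N$-independent prefactor $K$: the Grassmann reordering signs from Wick's theorem together with the orientation-fixing signs $\epsilon$ built into the definitions of the invariants \eqref{eq: invariant} and of the propagator \eqref{eq: propagator}. This is not optional bookkeeping --- if those signs stayed in $K$, then $K$ would depend on $\fb$ and the duality $\fb\to\fb+1$, $N\to-N$ of Corollary~\ref{corrollary} would fail. The paper's Lemma~\ref{lem} does precisely this: using the properties \eqref{eq:prop2}--\eqref{eq:prop4} of the sign of directed pairings it collapses all reorientation and reordering signs into a single factor $\epsilon(\vM_{tot},\vE(\vec{\cS}))^{\fb}=(-1)^{\fb F_{even}(\vcG)}$, which combines with $(-1)^{\fb F_{odd}(\vcG)}$ from the metric to give $(-1)^{\fb F(\cG)}$. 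You correctly flag this interplay as ``the main obstacle,'' but deferring it leaves a genuine gap, since it is the entire content of the proof; your closing remark about checking that the projectors do not alter the trace ``beyond the single overall sign'' points at the issue without resolving it (in the paper the (anti)symmetrizers are expanded as linear combinations of Brauer diagrams, each term giving its own stranded graph with its own face count, so there is no single overall sign to check).
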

%\begin{proof}
%The theorem follows from Lemma~\ref{lem} combined with a perturbative expansion. See Section~\ref{sec: proof} for more details.
%\end{proof}

The main result of this paper follows as a direct consequence of the theorem above:
\begin{cor} \label{corrollary}
Tensor models of the form in Def.~\ref{def: intro} with symmetry given by the $O(N)$ tensor representation $R$ are dual to corresponding tensor models with $Sp(N)$ symmetry given by the representation with transposed Young diagrams $R^\prime$ (exchanging symmetrization and antisymmetrization) in the sense that the amplitudes of graphs in their perturbative expansions are mapped into each other after a change of $N$ to $-N$.
\end{cor}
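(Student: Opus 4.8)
The plan is to unpack the amplitude formula \eqref{eq: amplitude} of Theorem~\ref{thm: main}, whose entire content is that the $N$-dependence of each graph amplitude is confined to the single factor $\big((-1)^{\fb}N\big)^{F(\cG)}$, while the combinatorial weight $K(\{\lambda\},\cG)$ and the face number $F(\cG)$ do not depend on the grading $\fb$. The first thing to pin down is the representation pairing. In the graded formulation of Definition~\ref{def: intro} a single abstract symmetry datum for the tensor is realized as the Young diagram $R$ when $\fb=0$ and as its transpose $R^\prime$ when $\fb=1$, because the symmetric Kronecker form $\delta$ that contracts indices for $O(N)$ is replaced by the antisymmetric symplectic form $\omega$ for $Sp(N)$, and this interchanges the roles of symmetrization and antisymmetrization. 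Hence the $O(N)$ model on $R$ and the $Sp(N)$ model on $R^\prime$ are precisely the $\fb=0$ and $\fb=1$ members of one and the same graded model, expanded over the \emph{same} set of $2$-colored stranded graphs $\cG$.

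Granting this identification, the second step is purely algebraic. By Theorem~\ref{thm: main} the amplitude of a graph $\cG$ is $K(\{\lambda\},\cG)\,N^{F(\cG)}$ in the $O(N)$ model ($\fb=0$) and $K(\{\lambda\},\cG)\,(-N)^{F(\cG)}$ in the $Sp(N)$ model ($\fb=1$), with the \emph{same} $K$ and the \emph{same} $F(\cG)$. Substituting $N\mapsto -N$ in the former expression reproduces the latter verbatim, which establishes the term-by-term correspondence of amplitudes.

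The third step is to propagate this graph-by-graph identity to the full perturbative series. Since $Z$ and the unnormalized expectation values $\langle I_\cB(T)\rangle$ are formal sums of these amplitudes over the common index set of stranded graphs $\cG$, and since $N\mapsto -N$ acts on each summand as above, the entire orthogonal series is carried into the symplectic one. For the normalized expectation values one divides by $Z$, and because numerator and denominator transform in the same way under $N\mapsto-N$, the normalized quantities are matched as well.

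The main obstacle is concentrated in the first step, and it is in fact already resolved inside Theorem~\ref{thm: main}: one must be certain that every sign generated by transposing a symmetrizer into an antisymmetrizer, together with the sign from $\delta\mapsto\omega$ and from the fermionic statistics when $D$ is odd, is captured entirely by the per-face factor $(-1)^{\fb}$ and leaves $K(\{\lambda\},\cG)$ genuinely $\fb$-independent. This is precisely the grading-independence of $K$ and $F$ that the theorem guarantees, so once the theorem is in hand the corollary reduces to the bookkeeping described above.
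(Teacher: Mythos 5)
Your proposal is correct and follows essentially the same route as the paper: invoke Theorem~\ref{thm: main}, observe that $(-1)^{\fb}N$ is invariant under the simultaneous shift $\fb\to\fb+1 \bmod 2$ and $N\to -N$ while $K$ and $F(\cG)$ are unaffected, and identify the $\fb$-shift with the exchange of symmetrization and antisymmetrization (hence transposition of Young diagrams, $R\to R'$). Your additional remarks on propagating the graph-by-graph identity to $Z$ and to normalized expectation values are consistent with, though more explicit than, the paper's two-sentence argument.
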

\begin{proof}
This follows from Theorem~\ref{thm: main}. The replacement $\fb\to \fb+1\mod 2$ and $N\to -N$ leaves the amplitude \eqref{eq: amplitude} unchanged, and, as will be noted in Section~\ref{sec: irreps}, the shift $\fb\to \fb+1\mod 2$ exchanges symmetrization and antisymmetrization in the tensor representation $R$.
This has the effect of transposing all Young diagrams $\lambda\to \lambda^\prime$, and leads to the tensor representation $R^\prime$.
\end{proof}

%\paragraph{Outline.} 
The paper is organized as follows. In Section~\ref{sec: Prereq} we recall several %, in some part classical, 
results on representation theory of the orthogonal and symplectic group focusing %in particular 
on the Brauer algebra, that plays a similar role as the algebra of the symmetric group for representations of $GL(N)$. 
At the end of this section we give a dictionary between notions used in the physics/tensor model and representation theory literature. In Section~\ref{sec: model} we define the tensor models of interest for this paper
and give their diagrammatic representation in terms of stranded graph. In Section~\ref{sec: proof} we give the proof of our main result, and in Section~\ref{sec: example} we use, as an explicit example, the totally symmetric and antisymmetric tensor representations to illustrate the duality between $O(N)$ and $Sp(N)$ tensor models proved in the previous section.

%%%%%%%%%%%%% Section %%%%%%%%%%%%%%%%%%%%%%%%%%%%
\section{Prerequisite}\label{sec: Prereq}
\subsection{Irreducible representations of the orthogonal and symplectic group}\label{sec: irreps}

In this section we review some definitions and results of the %classical 
theory of irreducible representations of the general linear group $GL(N)$ and its connection to representations of the symmetric group $\fS_D$ and Young diagrams.
We further review irreducible representations of the groups $O(N)$ and $Sp(N)$, preserving some non-degenerate bilinear form and their connections to the Brauer algebra.

Let $V=\mathbb{R}^N$. Both $GL(N)$ and $\fS_D$ act on the tensor product space $V^{\otimes F}$. Irreducible representations of $GL(N)$ can be obtained as the image of certain elements of the group algebra $\mathbb{C}\fS_D$ (Young symmetrizers). Analogously, irreducible representations of $O(N)$ or $Sp(N)$ can be obtained using projectors, defined by elements of the Brauer algebra $B_D$. 
%Such correspondences are the content of Schur-Weyl type dualities.

Our exposition is based on \cite{FultonHarris} and, when concerning the Brauer algebra, on \cite{braueralg}. 
We further refer the interested reader to \cite{brauer,wenzl} or to the books \cite{weyl} or  \cite{Cvitanovicbook}.

\paragraph{Young tableaux.}
For general combinatorial references on Young tableaux, we refer to Chapter $XIV$ of the handbook \cite{handbook}.
To a partition $\lambda=(\lambda_1,\lambda_2,\dots,\lambda_k)$ of $D\in\mathbb{N}$, denoted as $\lambda\vdash D$, i.e.~a sequence of non increasing integers with $|\lambda|=\sum_{i=1}^k\lambda_i=D$, we associate a \emph{Young diagram}
\begin{equation}
\lambda=\quad
\ytableausetup{centertableaux}
\begin{ytableau}
\none[\lambda_1\hspace{1em}] & & & \\
\none[\lambda_2\hspace{1em}] & & & \\
\none[\lambda_3\hspace{1em}] & & \\
\none[\lambda_4\hspace{1em}] & \\
\none[\lambda_5\hspace{1em}] & \\
\end{ytableau}
\end{equation}
with $\lambda_i$ boxes in the $i$th row. Note that we are using here the English notation for Young diagrams and tableaux.
The dual diagram $\lambda^\prime$ is obtained by interchanging rows and columns in the Young diagram. Let us {recall} that
Young diagrams can be used to define projectors onto irreducible representations of the symmetric group $\fS_D$.

Given a Young diagram, a \emph{Young tableau} is a numbering of the boxes by the integers $1,2,\dots, D$. The canonical Young tableau is obtained by numbering the boxes consecutively:
\begin{equation}
\begin{ytableau}
1 & 2 & 3 \\
4 & 5 & 6 \\
7 & 8 \\
9 \\
10 \\
\end{ytableau} \;.
\end{equation}
Define the sets of row and column permutations:
\begin{equation}
\begin{aligned}
P_\lambda &= \{ g\in\fS_D\ |\ g\ \text{preserves each row} \} \;, \\
Q_\lambda &= \{ g\in\fS_D\ |\ g\ \text{preserves each column} \} \;.
\end{aligned}
\end{equation}
Next, one introduces two elements of the group algebra $\mathbb{C}\fS_D$:
\begin{equation}
    a_\lambda = \sum_{g\in P_\lambda} g \;,\quad b_\lambda=\sum_{g\in Q_\lambda} \sgn(g) g \;.
    \label{eq:ab}
\end{equation}
Noting that $\mathbb{C}\fS_D$ acts on $V^{\otimes D}$ by permuting factors, $a_\lambda$ acts as a symmetrizer and $b_\lambda$ as an antisymmetrizer on the tensors. Finally, the \emph{Young symmetrizer} is defined as:
\begin{equation}
    c_\lambda = a_\lambda \cdot b_\lambda \;.
\end{equation}
Consider as an example $\lambda=\ytableausetup{boxsize=.5em}\ydiagram{3}$ or $\ydiagram{1,1,1}$. The image of the action of $c_\lambda$ on $V^{\otimes 3}$ is $\mathrm{Sym}^3 V$ or $\bigwedge^3 V$, the spaces of totally symmetric or antisymmetric tensors, respectively.
\ytableausetup{boxsize=normal}

The permutation group $\fS_D$ acts on tensors in $V^{\otimes D}$ by permutation of the indices, with $V$ a vector space of dimension $N$. Then all the previous projectors give rise to representations, which are in general reducible. The dimension of a representation indexed by the Young diagram $\lambda$ reads
\begin{align}
\text{dim}(\pi_{\lambda,N})=\prod_{(i,j)\in\lambda}\frac{N-i+j}{h_{ij}}
\end{align}
with $i$ (resp. $j$) the row  (resp. column) label of the box and $h_{ij}$ the hook length of the box $(i,j)$, i.e. 
\begin{align}
    h_{i,j}=\#\big\{ (k,l)\text{ with $k=i$, $l\geq j$ or $l=j$, $k\geq i$}\big\}.
\end{align}
Then, it is worthwhile to notice that it is a polynomial in $N$ that obey the relation
\begin{align}
\text{dim}(\pi_{\lambda,-N})=
(-1)^{|\lambda|}\text{dim}(\pi_{\lambda',N})
\end{align}
with $|\lambda|$ the number of boxes in $\lambda$ and $\lambda'$ the dual diagram.  Therefore, trading $N$ for $-N$ involves exchanging rows and columns, or equivalently, symmetrization and antisymmetrization. For example
\begin{align}
\text{dim}\Big(\ytableausetup{boxsize=0.5em}\ydiagram{2,1,1},N\Big)
=\frac{N(N-1)(N+1)(N+2)}{4\cdot 2\cdot 1\cdot 1}
\mathop{\leftrightarrow}\limits_{\text{}duality}
\text{dim}\Big(\ytableausetup{boxsize=0.5em}\ydiagram{3,1},-N\Big)
=\frac{N(N-1)(N-2)(N+1)}{4\cdot 2\cdot 1\cdot 1}
\end{align}
These are representations of the symmetric group but for our purposes it turns out to be helpful to identify irreducible representations of the groups $O(N)$ and $Sp(N)$ inside the previous ones, as we shall do in the following.

\paragraph{Representations.}
%Due to a result of Weyl, t
Let us recall that a representation of the group $GL(N)$ on $V^{\otimes D}$ is semisimple and decomposes into a direct sum of irreducible representations that are determined by irreducible representations of $\fS_D$, and thus indexed by Young diagrams. For simplicity, we focus on $N$ much larger  than $D$ ($N\geq 2D$). 
Note that for small $N$ not all Young diagrams give irreducible representations.

An analogous construction holds for the groups $O(N)$ and $Sp(N)$ that preserve a non-degenerate (skew-)symmetric bilinear form. The main difference lies in the ability to form traces by contacting two factors of $V^{\otimes D}$ with the bilinear form. To allow for these contractions, the group algebra $\mathbb{C}\fS_D$ is replaced by the Brauer algebra $B_D$ \cite{brauer}. As subgroups $O(N), Sp(N)\subset GL(N)$, irreducible representations of $GL(N)$ are still representations of $O(N)$ and $Sp(N)$, but not necessarily irreducible. However, irreducible $O(N)$ or $Sp(N)$ representations can be obtained by traceless projections of irreducible representations of $GL(N)$. In \cite{braueralg}, a universal traceless projector $\mathfrak{P}_D\in B_D$ was constructed, such that irreducible $O(N)$ or $Sp(N)$ representations can be obtained by first subtracting traces by applying $\mathfrak{P}_D$, and second applying a projector (e.g.~Young symmetrizer) to an irreducible $GL(N)$ representation. 
Note that, in particular, both operations commute.

\paragraph{Brauer algebra.}
Let us now exhibit the Brauer algebra $B_D(z)$, for $D\in \mathbb{N}$, $z\in\mathbb{C}$.
%,\footnote{In this abstract context, $z$ is an arbitrary complex number and not necessarily the dimension of some vector space.}.

For $D\in\mathbb{N}$, draw two horizontal rows of vertices labelled $1, 2, \dots, D$. \emph{Brauer diagrams} are represented by pairings of these $2D$ vertices. If every vertex in the top row is connected to a vertex in the bottom row, these elements represent permutation diagrams. Thus, 
$\fS_D$ is a subset of the diagrams and  $\mathbb{C} \fS_D$ a subset of the algebra. 
%$\fS_D$ is a subset of the set of Brauer diagrams. 
For example:
\begin{equation}
\sigma = \begin{tikzpicture}[brauer]
	\foreach \i in {1,2,3,4}{
		\node (\i a) at (0.7*\i-.7,0) {};  \node (\i b) at (0.7*\i-0.7,-1) {};  
	    \node[above=.5ex,fill=none] at (\i a.north) {\i}; \node[below=.5ex,fill=none] at (\i b.south) {\i}; }
    \draw (1a)--(2b) (2a)--(3b) (3a)--(1b) (4a)--(4b);
\end{tikzpicture}
\;,\quad 
\tau = \begin{tikzpicture}[brauer]
	\foreach \i in {1,2,3,4}{
		\node (\i a) at (0.7*\i-.7,0) {};  \node (\i b) at (0.7*\i-0.7,-1) {};    
		\node[above=.5ex,fill=none] at (\i a.north) {\i}; \node[below=.5ex,fill=none] at (\i b.south) {\i}; }
	\draw (1a)--(2b) (2a)--(1b) (3a)--(4b) (4a)--(3b);
\end{tikzpicture} \;.
\end{equation}
For simplicity, from now one we omit the labels on our diagrams.
Since Brauer diagrams are more general than permutation diagrams, the set of Brauer diagrams includes elements such as
\begin{equation}
\beta =\ \begin{tikzpicture}[brauer]
	\foreach \i in {1,2,3,4}{
		\node (\i a) at (0.7*\i-.7,0) {};  \node (\i b) at (0.7*\i-0.7,-1) {}; }
	\draw (1a)to[bend right](3a) (2a)to[bend right](4a) (1b)to[bend left](2b) (3b)to[bend left](4b);
\end{tikzpicture} \;,\quad
\upsilon =\ \begin{tikzpicture}[brauer]
	\foreach \i in {1,2,3,4}{
		\node (\i a) at (0.7*\i-.7,0) {};  \node (\i b) at (0.7*\i-0.7,-1) {}; }
	\draw (1a)to[bend right](2a) (3a)--(4b) (4a)--(2b) (1b)to[bend left](3b);
\end{tikzpicture} \;,
\end{equation}
having arcs connecting vertices of the same row. 
The product of two Brauer diagrams $\sigma\tau$ is defined by placing $\sigma$ below $\tau$ and ``straightening'' the lines:
\begin{equation}
\sigma\tau =\ \begin{tikzpicture}[brauer]
	\foreach \i in {1,2,3,4}{
		\node (\i a) at (0.7*\i-.7,0) {};  \node (\i b) at (0.7*\i-0.7,-1) {};  \node (\i c) at (0.7*\i-0.7,-2) {};  }
	\draw (1b)--(2c) (2b)--(3c) (3b)--(1c) (4b)--(4c);
	\draw (1a)--(2b) (2a)--(1b) (3a)--(4b) (4a)--(3b);
\end{tikzpicture}
\ =\
\begin{tikzpicture}[brauer]
	\foreach \i in {1,2,3,4}{
		\node (\i a) at (0.7*\i-.7,0) {};  \node (\i b) at (0.7*\i-0.7,-1) {};  }
	\draw (1a)--(3b) (2a)--(2b) (3a)--(4b) (4a)--(1b);
\end{tikzpicture}
\;.
\end{equation}
For permutation diagrams, this is equivalent to the product of the permutations. Whenever loops appear, they get deleted to obtain again a Brauer diagram.

The Brauer algebra $B_D(z)$ is the free $\mathbb{C}$-algebra on the set of Brauer diagrams together with the above product and the additional rule stating that when $l\geq0$ loops appear in the product of two Brauer diagrams, the resulting diagram gets multiplied by a factor $z^l$.
\begin{equation}
\beta\upsilon =\ \begin{tikzpicture}[brauer]
	\foreach \i in {1,2,3,4}{
		\node (\i a) at (0.7*\i-.7,0) {};  \node (\i b) at (0.7*\i-0.7,-1) {};  \node (\i c) at (0.7*\i-0.7,-2) {};  }
	\draw (1a)to[bend right](2a) (3a)--(4b) (4a)--(2b) (1b)to[bend left](3b);
	\draw (1b)to[bend right](3b) (2b)to[bend right](4b) (1c)to[bend left](2c) (3c)to[bend left](4c);
\end{tikzpicture}
\ = z\
\begin{tikzpicture}[brauer]
	\foreach \i in {1,2,3,4}{
		\node (\i a) at (0.7*\i-.7,0) {};  \node (\i b) at (0.7*\i-0.7,-1) {}; }
	\draw (1a)to[bend right](2a) (3a)to[bend right](4a) (1b)to[bend left](2b) (3b)to[bend left](4b) ;
\end{tikzpicture}
\;.
\end{equation}
Note that one has: $\mathbb{C}\fS_D\subset B_D(z)$ (diagrams with zero arcs).

A set of generators of $B_D(z)$ is given by $\sigma_i$ and {$\beta_i$} ($i=1,2,\dots,D-1$):
\begin{equation}\label{eq: generators}
\sigma_i=\ \begin{tikzpicture}[brauer]
	\foreach \i in {1,3,4,5,6,8}{
		\node (\i a) at (0.7*\i-0.7,0) {};  \node (\i b) at (0.7*\i-0.7,-1) {};  }
	\foreach \i in {2,7}{
		\node[fill=none] at (0.7*\i-0.7,0) {$\dots$};  \node[fill=none] at (0.7*\i-0.7,-1) {$\dots$};  }
	\draw (1a)--(1b) (3a)--(3b) (4a)--(5b) (5a)--(4b) (6a)--(6b) (8a)--(8b) ;
	\node[above=.5ex,fill=none] at (4a.north) {$i$}; \node[above=-1ex,fill=none] at (5a.north) {$i+1$}; 
	\node[above=.5ex,fill=none] at (1a.north) {1}; \node[above=.5ex,fill=none] at (8a.north) {$D$};
\end{tikzpicture}
\;,\quad
\beta_i=\ \begin{tikzpicture}[brauer]
	\foreach \i in {1,3,4,5,6,8}{
		\node (\i a) at (0.6*\i-0.6,0) {};  \node (\i b) at (0.6*\i-0.6,-1) {};  }
	\foreach \i in {2,7}{
		\node[fill=none] at (0.6*\i-0.6,0) {$\dots$};  \node[fill=none] at (0.6*\i-0.6,-1) {$\dots$};  }
	\draw (1a)--(1b) (3a)--(3b) (4a)to[bend right](5a) (4b)to[bend left](5b) (6a)--(6b) (8a)--(8b) ;
	\node[above=.5ex,fill=none] at (4a.north) {$i$}; \node[above=-1ex,fill=none] at (5a.north) {$i+1$}; 
	\node[above=.5ex,fill=none] at (1a.north) {1}; \node[above=.5ex,fill=none] at (8a.north) {$D$};
\end{tikzpicture} \;.
\end{equation}
Furthermore, we introduce the following elements for $i<j$:
\begin{equation}\label{eq: betaij}
\sigma_{ij} = \ \begin{tikzpicture}[brauer]
	\foreach \i in {1,3,4,5,7,8,9,11}{
		\node (\i a) at (0.6*\i-0.6,0) {};  \node (\i b) at (0.6*\i-0.6,-1) {};  }
	\foreach \i in {2,6,10}{
		\node[fill=none] at (0.6*\i-0.6,0) {$\dots$};  \node[fill=none] at (0.6*\i-0.6,-1) {$\dots$};  }
	\draw (1a)--(1b) (3a)--(3b) (4a)--(8b) (5a)--(5b) (7a)--(7b) (8a)--(4b) (9a)--(9b) (11a)--(11b) ;
	\node[above=.5ex,fill=none] at (4a.north) {$i$}; \node[above=.5ex,fill=none] at (8a.north) {$j$}; 
	\node[above=.5ex,fill=none] at (1a.north) {1}; \node[above=.5ex,fill=none] at (11a.north) {$D$};
\end{tikzpicture}
\;,\quad
\beta_{ij} = \ \begin{tikzpicture}[brauer]
	\foreach \i in {1,3,4,5,7,8,9,11}{
		\node (\i a) at (0.6*\i-0.6,0) {};  \node (\i b) at (0.6*\i-0.6,-1) {};  }
	\foreach \i in {2,6,10}{
		\node[fill=none] at (0.6*\i-0.6,0) {$\dots$};  \node[fill=none] at (0.6*\i-0.6,-1) {$\dots$};  }
	\draw (1a)--(1b) (3a)--(3b) (4a)to[bend right](8a) (5a)--(5b) (7a)--(7b) (4b)to[bend left](8b) (9a)--(9b) (11a)--(11b) ;
	\node[above=.5ex,fill=none] at (4a.north) {$i$}; \node[above=.5ex,fill=none] at (8a.north) {$j$}; 
	\node[above=.5ex,fill=none] at (1a.north) {1}; \node[above=.5ex,fill=none] at (11a.north) {$D$};
\end{tikzpicture}\;.
\end{equation}

\paragraph{Action on \boldmath$V^{\otimes D}$.}
If $V$ is a real $N$-dimensional vector space with non-degenerate bilinear form $g$, that can be the standard symmetric or symplectic form, one considers integer values of $z=(-1)^{\fb}N$, $N\in\mathbb{N}$ ($\fb=0$ in the symmetric, and $\fb=1$ in the symplectic case). The Brauer algebra $B_D((-1)^{\fb}N)$ acts naturally on tensors of order $D$ that we represent by their components
\begin{equation}
T=T^{a_1a_2\dots a_D}\; e_{a_1}\otimes e_{a_2}\otimes\dots\otimes e_{a_D} \;,
\end{equation}
where $\{e_a\}_{a=1,2,\dots N}$ is a standard basis with respect to the bilinear form $g$ on $V$. An element $\beta\in B_D( (-1)^{\fb}N)$, corresponding to a single Brauer diagram, acts as follows on $T^{a_1a_2\dots a_D}$:
\begin{enumerate}
    \item Place the indices $a_1a_2\dots a_D$ in the top row of the Brauer diagram.
    \item Permute them according to the lines that connect the bottom to the top row.
    \item Contract them with $g$ if they are connected by an arc in the top row.
    \item Add a factor $g^{a_ia_j}$ for each arc in the bottom row.
    \item Multiply the result by 
    $(\eta(\beta))^{\fb}$, where $\eta(\beta)= (-1)^m$ where $m$ is the minimal number of crossings in $\beta$.\footnotemark
\end{enumerate}
\footnotetext{ This sign can be expressed as the sign of oriented pairings in subsection~\ref{sec: translation}. }
Crucially, because of the last point, in applications to $Sp(N)$, $B_D(-N)$ acts in a signed representation.
More explicit, we can associate to $\beta$ a linear map in $\mathrm{End}(V^{\otimes D})$, whose components 
write
\begin{equation}
(\beta)^{a_1a_2\dots a_D}_{b_1b_2\dots b_D}=  \eta(\beta)^{\fb}
    \prod_{\substack{(i,j)\\ i \text{ in the bottom row}\\ \text{connected to }j\text{ in the top row}}} \kern-1em \delta^{a_i}_{b_j}
    \prod_{\substack{(k,l)\\ k \text{ connected to }l\\ \text{by an arc in the bottom row}}} \kern-1em g^{a_ka_l}
    \prod_{\substack{(m,p)\\ m \text{ connected to }p\\ \text{by an arc in the top row}}} \kern-1em g_{b_mb_p} \;,
\end{equation}
and it acts on the tensor components as:
\begin{equation}
    \beta\cdot T^{a_1a_2\dots a_D} = \sum_{b_1,b_2,\dots,b_D} (\beta)^{a_1a_2\dots a_D}_{b_1b_2\dots b_D}\; T^{b_1b_2\dots b_D} \;.
\end{equation}
For example, one has:
\begin{align}
    \sigma_{ij}\cdot T^{a_1\dots a_i\dots a_j\dots a_D} &= T^{a_1\dots a_j\dots a_i\dots a_D} \;, \\
    \beta_{ij}\cdot T^{a_1\dots a_i\dots a_j\dots a_D} &= g^{a_ia_j}\; g_{b_ib_j}\; T^{a_1\dots b_i\dots b_j\dots a_D} \;, \\
    \upsilon\cdot T^{a_1a_2a_3a_4} &=  g^{a_1a_3}\; g_{b_1b_2}\; T^{b_1b_2a_4a_2} \;.
\end{align}
The action is extended to arbitrary elements of the Brauer algebra by linearity.
One can also raise the indices of the linear map using the bilinear form such that:
\begin{equation}
\begin{split}
(\beta)^{a_1a_2\dots a_D, b_{1}\dots b_{D}}&= (\beta)^{a_1a_2\dots a_D}_ {c_1 c_2\dots c_D} \; g^{c_1 b_{1}} \ldots \; g^{c_D b_{D}} 
\\& = \eta(\beta)^{\fb}
    \prod_{\substack{(i,j)\\ i \text{ in the bottom row}\\ \text{connected to }\\ j\text{ in the top row}}} \kern-1em g^{a_i b_j}
    \prod_{\substack{(k,l)\\ k \text{ connected to }l\\ \text{by an arc in the bottom row}}} \kern-1em g^{a_k a_l}
    \prod_{\substack{(m,p)\\ m \text{ connected to }p\\ \text{by an arc in the top row}}} \kern-1em g^{b_m b_p} \;.
\end{split}
\end{equation}
Note that
because of the sign  $\eta(\beta)$ in the definition of the action on $V^{\otimes D}$, the interchange of symmetrization and antisymmetrization when going from $O(N)$ representations to $Sp(N)$ representations is already built in. This can be seen by the fact that in the case where $\sigma$ is a permutation, the sign $\eta(\sigma)$ corresponds to $\sgn(\sigma)$. The components of the linear map associated to $a_\lambda$ and $b_\lambda$ (see \eqref{eq:ab}) thus are:
\begin{equation}
\begin{split}
    (a_\lambda)^{a_1 \ldots a_D}_{b_1 \ldots b_D} &= \sum_{\sigma \in P_\lambda} \sgn(\sigma)^{\fb}  \prod_{\substack{(i,j)\\ j=\sigma(i)}} \delta^{a_i}_{b_j}  \; , \\
    (b_\lambda)^{a_1 \ldots a_D}_{b_1 \ldots b_D}&=\sum_{\tau\in Q_\lambda} \sgn(\tau)^{b+1} \prod_{\substack{(i,j)\\ j=\tau(i)}} \delta^{a_i}_{b_j} \;.
    \label{eq:ab2}
\end{split}
\end{equation}

In conclusion, in the $O(N)$ case ($\fb=0$), $a_\lambda$ now acts as a symmetrizer and $b_\lambda$ as an antisymmetrizer whereas the roles are reversed in the $Sp(N)$ case ($\fb=1$). The product $c_\lambda = a_\lambda \cdot b_\lambda $ thus corresponds to the Young symmetrizer associated to a tableau $\lambda$ when $\fb=0$ and to the symmetrizer associated to the dual tableau $\lambda^\prime$, obtained by permuting the rows and columns of $\lambda$, when $\fb=1$.

\paragraph{Traceless projector.} In order to implement the projection onto irreducible representations of $O(N)$ or $Sp(N)$, the authors of \cite{braueralg} build a universal traceless projector, which we 
{introduce} here, for the sake of completeness. The main building block of this projector is:
\begin{equation}
A_D=\sum_{1\leq i<j\leq D} \beta_{ij} \;\in B_D((-1)^{\fb}N) \;.
\end{equation}
Let us now list some important properties of $A_D$:
\begin{itemize}
    \item It commutes with all elements of $\mathbb{C}\fS_D\subset B_D(N)$. Thus, in particular, it commutes with Young symmetrizers.
    \item The action of $A_D$ on $V^{\otimes D}$ is diagonalizable.
    \item The kernel $\ker A_D\subset V^{\otimes D}$ is exactly the space of traceless tensors.
    \item Its non-zero eigenvalues are in $(-1)^{\fb}\mathbb{N}$.
\end{itemize}
The proof of these statements can be found in \cite{braueralg}, and the universal traceless projector is %now
given by:
\begin{equation}\label{eq: tracelessproj}
    \mathfrak{P}_D= \sum_{\alpha\text{ non-zero eigenvalue of }A_D} \big(1-\frac{1}{\alpha}A_D \big) \;.
\end{equation}
Explicit formulas for the non-zero eigenvalues $\alpha$ are also given in \cite{braueralg}.

\subsection{Sign of directed pairings}\label{sec: pairings}

In this subsection, we define the sign given by two oriented pairings and give some of its properties.

Consider two oriented pairings $\vM_1$ and $\vM_2$ on a set of $2D$ elements, suppose these two pairings are given by
\begin{equation}
\begin{split}
    &\vM_1 = \{ (i_1,i_2), \ldots , (i_{2D-1},i_{2D}) \} \,, \\
    &\vM_2 = \{ (j_1,j_2), \ldots , (j_{2D-1},j_{2D}) \} \,.
\end{split}
\end{equation}
The sign $\epsilon(\vM_1,\vM_2)$ of the two pairings is defined as the sign of the permutation $\sigma=\big(\begin{smallmatrix}i_1&i_2&\dots&i_{2D-1}&i_{2D}\\ j_1&j_2&\dots&j_{2D-1}&j_{2D}\end{smallmatrix}\big)$:
\begin{equation}
    \epsilon(\vM_1,\vM_2) = \sgn \left(\big(\begin{smallmatrix}i_1&i_2&\dots&i_{2D-1}&i_{2D}\\ j_1&j_2&\dots&j_{2D-1}&j_{2D}\end{smallmatrix}\big) \right)
    \label{eq: sign}
\end{equation}
We give here a list of some of the properties of the sign $\epsilon(\vM_1,\vM_2)$:
\begin{enumerate}
     \item It is symmetric under permutation of its arguments: 
    \begin{equation}
      \epsilon(\vM_1,\vM_2) = \epsilon(\vM_2,\vM_1)\; .
       \label{eq:prop1}
    \end{equation}
    \item For three pairings $\vM_1$, $\vM_2$, $\vM_3$ on the same set, one has:
        \begin{equation}
            \epsilon(\vM_1,\vM_2) = \epsilon(\vM_1,\vM_3) \epsilon(\vM_2,\vM_3).
           \label{eq:prop2}
        \end{equation}
    \item For two pairings $\vM_1$, $\vM_2$ on a first set $\mathcal{S}_1$ of $2D$ elements and two pairings $\vM_3$, $\vM_4$ on a second set $\mathcal{S}_2$ of $2p$ elements, 
    %the product $ \epsilon(\vM_1,\vM_2) \epsilon(\vM_3,\vM_4)$ can be written as the sign of the disjoint union of pairings $\vM_1 \sqcup \vM_3$ and $\vM_2 \sqcup \vM_4$ (on the set $\mathcal{S}_1 \sqcup \mathcal{S}_2$):
    one has:
    \begin{equation}
       \epsilon(\vM_1,\vM_2) \epsilon(\vM_3,\vM_4) = \epsilon(\vM_1 \sqcup \vM_3,\vM_2 \sqcup \vM_4)\;.
      \label{eq:prop3}
    \end{equation}
    \item Consider a set of elements $\mathcal{S}_v$ and two pairings $\vM_1$ and $\vM_2$ on this set. Depict each elements of $\mathcal{S}_v$ as a node and each pair in $\vM_1$ and $\vM_2$ as an oriented edge pointing from the first element to the second, and of color $1$ for the pairs in $\vM_1$ and color $2$ for the ones in $\vM_2$. The sign $\epsilon( \vM_1,\vM_2 )$ can be written as:
    \begin{equation}
        \epsilon( \vM_1,\vM_2 ) = (-1)^{F_{1/2,even}} \, .
        \label{eq:prop4}
    \end{equation}
    In the equation above, $F_{1/2,even}$ is the number of even faces of color $1$ and $2$ of the graphical representation described above. An even, resp.~odd, face of color $1$ and $2$ is defined as a closed cycle of alternating colors $1$ and $2$ where an even, resp.~odd, number of edges point in one direction around the cycle. Because each face consists of an even number of edges, this notion is well defined. %A quick proof of this statement can be found in \cite{kepplermuller23}. 
\end{enumerate}

In the sequel, the sign
%following analysis, 
$\epsilon$ plays a crucial role in the proof of our main theorem. Let us first link this quantity to the Brauer algebra and connect them to the tensor models. 

\subsection{Pairings, the Brauer algebra, propagators and projectors}
\label{sec: translation}

In this subsection we exhibit the connection between the notions of subsections~\ref{sec: irreps} and \ref{sec: pairings}, and
propagators in the random tensor models we study in this paper.

The relation between Brauer diagrams and pairings is straightforward, as each Brauer diagram is a pairing of $2D$ vertices. 
Moreover, the sign $\eta(\beta)$ that appears in the description of the action of $B_D((-1)^{\fb}N)$ on $V^{\otimes D}$, can be expressed as the sign of two directed pairings by the following construction:
\begin{enumerate}
    \item Label the vertices in the Brauer diagram $1,2,\dots,D$ in the top row and $D+1, D+2,\dots, 2D$ in the bottom row.
    \item Let $\vec{\beta}$ be the directed pairing induced by $\beta$, where edges are oriented from top to bottom, left to right in the top row and right to left in the bottom row.
    \item Let $\vM_{ref} =\{(1,D+1),(2,D+2),\dots (D,2D)\}$ be the reference pairing, that pair top to bottom vertices.
    \item One then has:  $\eta(\beta)=\epsilon(\vec{\beta},\vM_{ref})\;$.
\end{enumerate}
%This can be checked easily on the generators \eqref{eq: generators} using \eqref{eq:prop4}.
This follows from the use of \eqref{eq:prop4}.
Moreover, $\beta^{a_1 \ldots a_D, a_{D+1} \ldots a_{2D}}$ admits a compact form in term of the oriented pairing $\vec{\beta}$:
\begin{equation}
\begin{split}
(\beta)^{a_1a_2\dots a_D, a_{D+1}\dots a_{2D}}= \epsilon(\vec{\beta},\vM_{ref})^{\fb}
    \prod_{(i,j) \in \vec{\beta}} g^{a_i a_j} \;.
\end{split}
\label{eq:pairing_brauer}
\end{equation}

In a random tensor model, with tensors of order $D$, living in a representation $R\subset V^{\otimes D}$ of the group $O(N)$ or $Sp(N)$, a propagator is a $O(N)$- or $Sp(N)$-linear map $C\in \mathrm{End}(R)$. As the Brauer algebra is isomorphic (for $N$ large enough) to this space of $O(N)$- or $Sp(N)$-linear maps, each propagator is also an element of $B_D((-1)^{\fb}N)$.

As a consequence of Schur's lemma, if the representation $R$ is irreducible, $C$ is proportional to the identity on $R$, and if $R$ is reducible and decomposes into a direct sum of distinct irreducible representations %\footnote{One has to be slightly more careful, if the irreducible representations in the decomposition come with multiplicities.} 
  $R_i$
%\begin{equation}
($R=\bigoplus_{i=1}^k R_i$),
%\end{equation}
then $C$ decomposes as well into a direct sum of maps $P_i$, each proportional to the identity on $R_i$. 

Denoting by ${P_R}\in\mathrm{End}(V^{\otimes D})$ the orthogonal projector on $R$, i.e.~$\mathrm{im}({P_R})=R$. The propagator can be trivially extended to the whole space $V^{\otimes D}$ by ${C}\circ {P_R}$. Thus, reformulating the implications of Schur's Lemma: If $R$ is irreducible the propagator is proportional to the projector on $R$, and if $R$ decomposes into distinct irreducible representations as above, the propagator is a linear combination of the projectors on the $R_i$.

When studying tensor models from a quantum field theoretical perspective, one is interested in the calculation of expectation values of the form:
\begin{equation}
    \langle f(T) \rangle =  \frac{\Big[e^{\partial_T \boldsymbol{C}\partial_T} e^{-V(T)} f(T) \Big]_{T=0}}{\Big[e^{\partial_T \boldsymbol{C}\partial_T} e^{-V(T)} \Big]_{T=0}} 
    \;,
\end{equation}
where $V(T)$ and $f(T)$ are invariant under the group action, and $\partial_T \boldsymbol{C}\partial_T$ is a short hand notation for the Laplacian-like second order differential operator:
\begin{equation}
  \partial_T \boldsymbol{C} \partial_T := \sum_{a^1_1,\dots,a^1_D,a^2_1,\dots,a^2_D=1}^N \frac{\partial}{\partial T^{a^1_1\dots a^1_D}} C^{a^1_1\dots a^1_D,\, a^2_1\dots a^2_D}  \frac{\partial}{\partial T^{a^2_1\dots a^2_D}} \;.
\end{equation}
Note that indices are raised and lowered by the non-degenerate bilinear form, as usual.
In the above formulation, the tensors are elements of $R$, i.e.~have some non-trivial symmetry. But one can as well consider every tensor $T\in R$ to arise from the projection of a tensor $\tilde{T}\in V^{\otimes D}$ without symmetry under permutation of its indices. Thus, if we supplement the derivative operator with the appropriate projector, only modes obeying the symmetry (tensors in $R$) propagate and $T$ can be replaced by $\tilde{T}$:
\begin{equation}
\label{eticheta}
    \langle f(T) \rangle =  \frac{\Big[e^{\partial_{\tilde{T}} (\boldsymbol{C P_R})\partial_{\tilde{T}}} e^{-V(\tilde{T})} f(\tilde{T}) \Big]_{\tilde{T}=0}}{\Big[e^{\partial_{\tilde{T}} (\boldsymbol{C P_R})\partial_{\tilde{T}}} e^{-V(\tilde{T})} \Big]_{\tilde{T}=0}} 
    \;,
\end{equation}
with the convention $\partial_{\tilde{T}}\tilde{T}=id_{V^{\otimes D}}$.

%%%%%%%%%%%%% Section %%%%%%%%%%%%%%%%%%%%%%%%%%%%
\section{The graded tensor model}\label{sec: model}

Let $T^{a_1 \ldots a_D}$ be the components of a generic random tensor  with $D$ indices (an order $D$ tensor). Each index of the tensor ranges from $1$ to $N$, the tensor has thus $N^D$ independent components. 
As already mentioned above, we introduce a parameter $\fb$, equal to $0$ or $1$, that defines the symmetry properties of the tensor. If $\fb=0$, resp.~$\fb=1$, the tensor transforms in some representation $R$ of order $D$ of the orthogonal group $O(N)$, resp.~symplectic group $Sp(N)$. Using Einstein summation convention the group action writes:
\begin{equation}
T^{a_1 \ldots a_D} \to T'^{a_1 \ldots a_D} =  \tensor{(O_{\fb})}{^{a_1}_{b_1}}  \tensor{(O_{\fb})}{^{a_2}_{b_2}}\ldots \tensor{(O_{\fb})}{^{a_D}_{b_D}} T^{b_1 \ldots b_D} \, , \quad O_\fb\in {\begin{cases}
O(N) &, \fb=0 \\
Sp(N) &, \fb=1
\end{cases}} .
\end{equation}
Moreover, the indices of the tensor are contracted using a graded symmetric form $g^{\fb}$ such that $g^{\fb}_{a b} = (-1)^{\fb} g^{\fb}_{b a}$. One has:
\begin{equation}
g^{\fb}_{a b} = \smash[b]{\begin{cases}
\delta_{a b}&, \fb=0 \\
\omega_{a b}&, \fb=1
\end{cases}} ,  \quad \text{with} \quad 
\delta = \left( 
    \begin{array}{c|c} 
      \mathds{1}_{N/2}  & 0 \\ 
      \hline 
      0 & \mathds{1}_{N/2}  
    \end{array} 
    \right)  
\quad \text{and} \quad 
\omega =     \left( 
    \begin{array}{c|c} 
      0 & \mathds{1}_{N/2}   \\ 
      \hline 
      -\mathds{1}_{N/2} & 0  
    \end{array} 
    \right) \; .   
\end{equation}

Thus, the tensor components are fermionic (odd graßmannian) if $\fb=1$ and $D$ odd, and bosonic otherwise%.\footnote{In short, 
(the parity of the tensor components is $\fb D \mod 2$). 
%Let us introduce the short-hand notation $T^{a_\mathcal{D}}$ for $T^{a_1 \ldots a_D}$ and denote by $g^{\, a_1 b_1}_{\fb}$ the components of the inverse of $g^{\fb}$ such that $g_{\, a c}^{\fb} g^{\, c d}_{\fb} = \delta_{a}^{\;d}$.

\paragraph{Invariants and directed stranded graphs.} By contracting indices with $g^{\fb}$ one can build invariant polynomials in the tensor components. 
Unlike the graded colored tensor models studied previously in \cite{Duality_Hannes, kepplermuller23}, two indices at different positions can now be contracted. Therefore, the invariants do not admit a graphic representation in term of directed edge colored graphs but they do admit one in terms of directed stranded graphs such that:
\begin{itemize}
\item each tensor is represented by a set of $D$ nodes labeled by its indices.
\item each contraction of indices is represented by a strand connecting the corresponding nodes.
\end{itemize}
\begin{definition}[Stranded Graph] We encode a directed stranded graph $\vec{\mathcal{S}}$ with $D$ strands by a set of nodes $V(\vec{\mathcal{S}})$ with $|V(\vec{\mathcal{S}})|$ elements, that come in groups of $D$,  and a set of edges, called strands, $\vec{E}(\vec{\mathcal{S}})$, such that $\vec{E}(\vec{\mathcal{S}})$ is a directed pairing of $V(\vec{\mathcal{S}})$.
One often refers to the %groups of 
$D$ nodes as vertices. If two such vertices are directly connected by $D$ strands one often refers to this collection of $D$ strands as edge. 

We also denote the undirected version of a directed stranded graph by $\mathcal{S}$. Two examples are drawn in Figure~\ref{fig:stranded}.
\end{definition}
\begin{figure}
    \centering
    \includegraphics[scale=0.8]{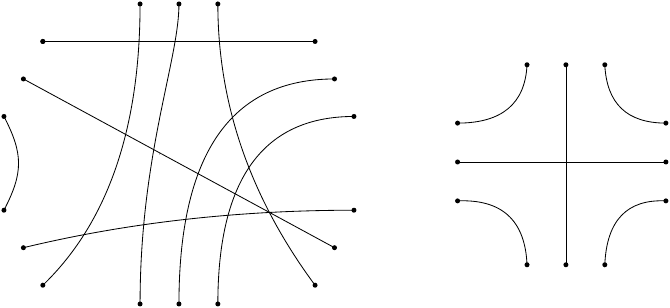}
    \caption{Two stranded graphs for $D=3$. }
    \label{fig:stranded}
\end{figure}
As a shorthand notation we write $a_\cD =(a_1, a_2, \dots , a_D)$ for the sequence of $D$ indices. 
To each directed stranded graph $\vec{\mathcal{S}}$ is then associated an invariant whose expression reads:
\begin{equation}
I_{\vec{\mathcal{S}}}(T) = \Big( \prod_{(i,j) \in \vec{M}_{ref}} T^{a^i_\mathcal{D}} T^{a^j_\mathcal{D}} \Big) \; \epsilon (\vec{M}^D_{ref},\vec{E}(\vec{\mathcal{S}}))^{\fb} \prod_{(k,l) \in \vec{E}(\vec{\mathcal{S}})} g^{\fb}_{\, k  l} \; .
\label{eq: invariant}
\end{equation}
In the equation above, $\vec{M}_{ref}$ is an arbitrary reference pairing of $2p=|V(\vec{\mathcal{S}})|/D$ tensors. The pairing $\vec{M}^D_{ref}$ is a directed pairing of the indices of the tensors given by the disjoint union of $D$ copies of $\vec{M}_{ref}$. An illustration is given %later 
in Figure~\ref{fig:M_0D}. The term $\epsilon (\vec{M}^D_{ref},\vec{E}(\vec{\mathcal{S}}))^{\fb}$ is the sign of the pairing $\vec{M}^D_{ref}$ with respect to $\vec{E}(\vec{\mathcal{S}})$;
this sign is defined in \eqref{eq: sign}. 

Introducing the sign of the pairings in the expression of an invariant fixes the ambiguity induced by the graded symmetry of $g^{\fb}$. 
Two invariants associated to two directed version of the same stranded graph $\mathcal{S}$ are equal, $I_{\vec{\mathcal{S}}}(T)$ is then a class function and we can choose a single representative of $\mathcal{S}$ in the action of our model, more comments on this can be found in \cite{kepplermuller23}. As a consequence we drop the arrow in the notation if we refer to the undirected version of the graph, and if the quantity does not depend on the chosen orientation of the graph.

As one may contract indices of different positions together, there are several possible quadratic invariants. We group them into a quadratic term of the form
$T^{a_\mathcal{D}} C^{-1}_{a_\mathcal{D} b_\mathcal{D}} T^{b_\mathcal{D}}$. %such that t
The propagator of the model is given by
\begin{equation}
%\label{propagen}
C^{a_\mathcal{D} b_\mathcal{D}} = \sum_{M \in  \mathbf{M}\{a_\cD b_\cD \}} \gamma_{M}\; \epsilon(\vec{M}, \vec{M}_{ref,C})^{\fb} \prod_{(i,j) \in \vec{M}} g_{\fb}^{\, ij}\;, \qquad \gamma_M\in\mathbb{R} \; ,
\label{eq: propagator}
\end{equation}  
where 
$g^{\, a_1 b_1}_{\fb}$ denotes the components of the inverse of $g^{\fb}$ such that $g_{\, a c}^{\fb} g^{\, c d}_{\fb} = \delta_{a}^{\;d}$. Moreover, 
$\mathbf{M}\{a_\cD b_\cD \}$ is the set of non oriented pairing on the set of $2D$ indices $a_{\cD} \cup b_{\cD}$ and $\vec{M}$ is a chosen oriented version of $M$.  The pairing $\vec{M}_{ref,C}$ is a reference pairing of the indices given by:
\begin{equation}
    \vec{M}_{ref,C} = \{ (a_1,b_1), \ldots, (a_D, b_D) \} \; .
\end{equation} 
This corresponds to the case where each index of the first tensor propagates to the index at the same position in the second tensor (see Figure~\ref{fig:M_ref}).  
\begin{figure}
    \centering
    \includegraphics[scale=1.0]{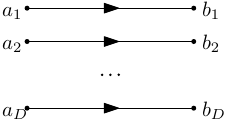}
    \caption{Graphical representation of $\vM_{ref,C}$}
    \label{fig:M_ref}
\end{figure}

Let us emphasize that the product $\boldsymbol{C P_R}$ in \eqref{eticheta} is a particular case of the general propagator \eqref{eq: propagator}, when ${\boldsymbol{C}}= \mathds{1}$ and $ \boldsymbol{P_R}$ is the projector on the irreducible representation $R$ of $O(N)$ or $Sp(N)$. This is explained in detail in Appendix~\ref{apendice}.

As noted in subsection~\ref{sec: translation}, $C^{a_\mathcal{D} b_\mathcal{D}}$ is an element of the Brauer algebra $B_D((-1)^{\fb}N)$. Each pairing $M$ in the sum represents a Brauer diagram and the factors $\gamma_M$ are the coefficients in the linear combination. The reference pairing $\vec{M}_{ref,C}$ coincides with $\vec{M}_{ref}$ from subsection~\ref{sec: pairings}. Note that Brauer diagrams are conventionally read from top to bottom, whereas propagators are usually drawn from left to right.

\begin{definition}[Graded Tensor Model with Symmetry]\label{def: model}
We define the \emph{graded tensor model with symmetry} $R$ by the measure:
\begin{equation} \label{eq: model-def}
\begin{aligned}
d\mu[T]&=e^{-S[T]}\ [dT], \quad [dT]=\zeta\prod_{a_{\mathcal{D}}} dT^{a_1\dots a_D}\; ,
\\ \text{with}\quad
S[T] &= T^{a_\mathcal{D}} C^{-1}_{a_\mathcal{D} b_\mathcal{D}} T^{b_\mathcal{D}} + \sum_{\substack{\mathcal{S}\ \text{connected,} \\ |V(\mathcal{S})|>2}} \frac{\lambda_{\mathcal{S}}}{|V(\mathcal{S})|/D}\, I_{\mathcal{S}}(T) \;, 
\end{aligned}
\end{equation}
and normalization $\zeta$ such that $\int d\mu[T]=1$ for $\lambda_\cS = 0$ $\forall \lambda_\cS$. All tensors are %considered to be 
elements of the $O(N)$ (for $\fb=0$), resp.~$Sp(N)$ (for $\fb=1$), representation $R$.
\end{definition}

In the definition above, the constant $\lambda_{\mathcal{S}}$ is the coupling constant of the invariant associated to $\mathcal{S}$. The partition function of this models writes:
\begin{equation}
    Z= \int d\mu[T] = \left[ e^{\partial_T \boldsymbol{C} \partial_T}  e^{\sum \frac{\lambda_{\mathcal{S}}}{|V(\mathcal{S})|/D} I_{\cS(T)}}\right]_{T=0} \, ,
\end{equation}
where the derivative representation \cite{Brydges:2014,salmhofer,gurau} of the Gaussian integral is used and $\partial_T \boldsymbol{C} \partial_T$ is a short-hand notation for:
\begin{equation}
  \partial_T \boldsymbol{C} \partial_T := \frac{\partial}{\partial T^{a^1_\cD}} C^{a^1_\cD a^2_\cD}  \frac{\partial}{\partial T^{a^2_\cD}} \,.
\end{equation}
When making use of the derivative representation, as discussed in Section~\ref{sec: translation}, we can take the tensors to have no symmetries under permutations of their indices, but instead incorporate an appropriate projector on the space $R$ in the definition of the propagator.

\section{Proof of the main result}
\label{sec: proof}

In this section, we prove the main theorem of our paper. We show that the partition function of the graded tensor model is invariant under the change of parameters $\fb \to \fb+1 \mod 2$ and $N \to -N$. By choosing the propagator according to a given symmetry specified by the $O(N)$ or $Sp(N)$ representation $R$ this implies the stated duality. %in the introduction. 
The appropriate choice of the propagator as an element of the respective Brauer algebra was discussed in Section~\ref{sec: pairings}.
From a mathematical point of view, the choice is implemented by fixing the pairing $\vM$ and constants $\gamma_M$ in \eqref{eq: propagator} accordingly.

Let us first recall the commutation relation of the tensor components:
%\begin{equation}
 $T^{a_\mathcal{D}} T^{b_\mathcal{D}} = (-1)^{\fb D }\; T^{b_\mathcal{D}} T^{a_\mathcal{D}}$.
%\end{equation}
The Gaußian (free) expectation value $\langle T^{a^1_\mathcal{D}} \ldots T^{a^{2p}_\mathcal{D}}\rangle_0$ of $2p$ tensors whose order is encoded by $\vM_{ref}$ is defined as:
\begin{equation}
\langle T^{a^1_\cD} \ldots T^{a^{2p}_\cD} \rangle_{0}
= \left[ e^{\partial_T \boldsymbol{C} \partial_T} \;  T^{a^1_\cD} \dots T^{a^{2p}_\cD} \right]_{T=0}\;.
\end{equation}
 For our model, Wick's theorem %\footnote{The proof of Wick's theorem is standard and proceeds like in \cite{kepplermuller23}.} 
expresses this expectation as a sum over pairings of $2p$ elements:
\begin{equation}
\langle T^{a^1_\mathcal{D}} \ldots T^{a^{2p}_\mathcal{D}}\rangle_0 = \sum_{M_0 \in \mathbf{M}_{2p}} \epsilon( \vM_{ref},\vM_0)^{\fb D} \Big( \prod_{(i,j) \in \vM_0} C^{a^i_\cD a^j_\cD} \Big) \;.
\label{eq:wick}
\end{equation}
The sign $\epsilon( \vM_{ref,2p},\vM_0)^{\fb D}$ in \eqref{eq:wick} takes into account the type (bosonic/fermionic) of the tensor components. The directed pairing $\vM_0$ is an arbitrary oriented version of $M_0$, but notice that the term $ \epsilon( \vM_{ref,2p},\vM_0)^{\fb D} \Big( \prod_{(i,j) \in \vM_0} C^{a^i_\cD a^j_\cD} \Big)$ is invariant under reorientation of pairs in $\vM_0$.

The Gaußian (free) expectation of an invariant $I_\cS(T)$ of order $2p$ specified by a stranded graph $\cS$ is defined as:
\begin{equation}
\label{free}
    \langle I_{\cS}(T) \rangle_0 = \left[ e^{\partial_T \boldsymbol{C} \partial_T}\;  I_\cS(T) \right]_{T=0}\;.
\end{equation}
This expectation value can be computed by pairing the $2p$ groups of $D$ vertices in $\cS$ by propagators \eqref{eq: propagator}. We represent this pairing by edges of a new color $0$, each consists again of $D$ strands.

The result is a sum over $2$-colored stranded graphs $\cG$, such that $\cS\subset\cG$ is the maximal subgraph of color $1$ (see Fig.~\ref{fig:2colorstranded} for an example of such a graph).

\begin{lem}\label{lem}
The Gaussian expectation \eqref{free} writes:
\begin{equation}
\big\langle I_{\cS} (T)\big\rangle_0 = \sum_{\substack{\cG,\; \cS \subset \cG \\ |V(\cG)|=2pD}} \gamma_{\cG} \left((-1)^{\fb} N \right)^{F(\cG)} \;,
\label{eq: lemma}
\end{equation}
where the power of $N$ is given by the number of faces of $\cG$. Moreover, the factor $\gamma_\cG$ is a product of weights associated to the edges of color $0$, given by the expression of the propagator in \eqref{eq: propagator}. It writes:
\begin{equation}
    \gamma_\cG = \prod_{e\in E_0(\cG)} \gamma_{M^e} \;,
\end{equation}
with $E_0(\cG)$ the set of edges of color 0 and $M^e$ the pairing (Brauer diagram) defining the path of the $D$ strands of the color $0$ edge $e$.
\end{lem}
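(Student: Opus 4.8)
The plan is to expand the Gaussian expectation \eqref{free} by Wick's theorem \eqref{eq:wick} and then substitute the explicit expressions for the invariant \eqref{eq: invariant} and for each propagator \eqref{eq: propagator}. Applying \eqref{eq:wick} to the $2p$ tensors produces a sum over pairings $\vM_0$ of the tensors; expanding every propagator $C^{a^i_\cD a^j_\cD}$ over its Brauer diagrams $M^e$ refines this into a sum over all $2$-colored stranded graphs $\cG$ whose maximal color-$1$ subgraph is $\cS$, the color-$0$ edges recording both the Wick pairing $\vM_0$ and the chosen Brauer diagrams. Denoting by $\vP$ the resulting directed pairing of the $2pD$ index-nodes carried by the color-$0$ strands, I would organise each summand into three pieces: the scalar coefficient $\prod_{e\in E_0(\cG)}\gamma_{M^e}=\gamma_\cG$ coming from \eqref{eq: propagator}; the product of all sign factors $\epsilon(\cdot,\cdot)^\fb$ arising from the invariant, the propagators and Wick's theorem; and the contraction of all metrics, $\prod_{(k,l)\in\vE(\vec{\cS})}g^\fb_{a_ka_l}\prod_{(i,j)\in\vP}g_\fb^{\,a_ia_j}$ summed over all node indices. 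The first piece is already $\gamma_\cG$, so it remains to show that the other two pieces together produce $\big((-1)^\fb N\big)^{F(\cG)}$.

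For the metric contraction I would argue face by face, a face being a cycle of $\vE(\vec{\cS})\cup\vP$ with alternating colors. For $\fb=0$ every metric is $\delta$ and each face contributes a trace $\delta_a^{\,a}=N$, giving $N^{F(\cG)}$ with no signs (the exponent $\fb$ makes all $\epsilon$-signs trivial). For $\fb=1$ the relevant form is $\omega$ on the color-$1$ strands and its inverse $\omega^{ab}$ on the color-$0$ strands; along a single face one may align every strand with a chosen traversal direction using $\omega_{ab}=-\omega_{ba}$, and after this alignment the product telescopes through $\omega_{ac}\omega^{cb}=\delta_a^{\,b}$ to a single trace equal to $N$, while each strand whose intrinsic orientation opposes the traversal contributes a factor $-1$. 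Hence a face contributes $(-1)^{k}N$, where $k$ is its number of oppositely oriented strands; by the very definition entering \eqref{eq:prop4}, $(-1)^k=+1$ for an even face and $-1$ for an odd one, so the full contraction equals $(-1)^{\#\{\text{odd faces}\}}N^{F(\cG)}$.

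It then remains to evaluate the product of the $\epsilon$-signs, which is the combinatorial heart of the argument. Writing $\vM^D_{ref}$ and $\vM^D_0$ for the $D$-fold disjoint copies of $\vM_{ref}$ and $\vM_0$ on the index-nodes, the disjoint-union rule \eqref{eq:prop3} identifies the Wick sign as $\epsilon(\vM_{ref},\vM_0)^D=\epsilon(\vM^D_{ref},\vM^D_0)$ and the product of propagator signs as $\epsilon(\vP,\vM^D_0)$, since the per-propagator reference pairings $\vM_{ref,C}$ assemble into $\vM^D_0$; the invariant contributes $\epsilon(\vM^D_{ref},\vE(\vec{\cS}))$. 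Applying the cocycle relation \eqref{eq:prop2} first with reference $\vM^D_0$ and then with reference $\vM^D_{ref}$, together with symmetry \eqref{eq:prop1}, these three factors collapse to $\epsilon(\vE(\vec{\cS}),\vP)$. Finally \eqref{eq:prop4} gives $\epsilon(\vE(\vec{\cS}),\vP)^\fb=(-1)^{\fb\,\#\{\text{even faces}\}}$, and multiplying by the contraction of the previous paragraph yields $(-1)^{\fb(\#\{\text{even}\}+\#\{\text{odd}\})}N^{F(\cG)}=\big((-1)^\fb N\big)^{F(\cG)}$, as claimed. I expect the main obstacle to be precisely this sign bookkeeping: correctly matching the several reference pairings ($\vM_{ref}$, $\vM_{ref,C}$, $\vM^D_{ref}$) across \eqref{eq: invariant}, \eqref{eq: propagator} and \eqref{eq:wick} so that the telescoping of \eqref{eq:prop2}--\eqref{eq:prop3} applies, and ensuring that the orientation-dependent signs of the symplectic contraction match the even/odd face count of \eqref{eq:prop4}.
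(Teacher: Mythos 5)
Your proposal is correct and follows essentially the same route as the paper's proof: Wick expansion, substitution of the propagator's Brauer-diagram decomposition, collapsing the three $\epsilon$-signs via properties \eqref{eq:prop1}--\eqref{eq:prop3} to $\epsilon(\vE(\vec{\cS}),\vM_{tot})^{\fb}$, and splitting the $N$-dependence face by face into $(-1)^{\fb F_{odd}}N^{F}$ from the metric contractions and $(-1)^{\fb F_{even}}$ from \eqref{eq:prop4}. The only differences are cosmetic (you denote $\vM_{tot}$ by $\vP$ and perform the sign telescoping in one step rather than two), and your explicit alignment argument for the symplectic trace along a face is a slightly more detailed version of the paper's appeal to the graded symmetry of $g^{\fb}$.
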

\begin{figure}
    \centering
    \includegraphics[scale=0.7]{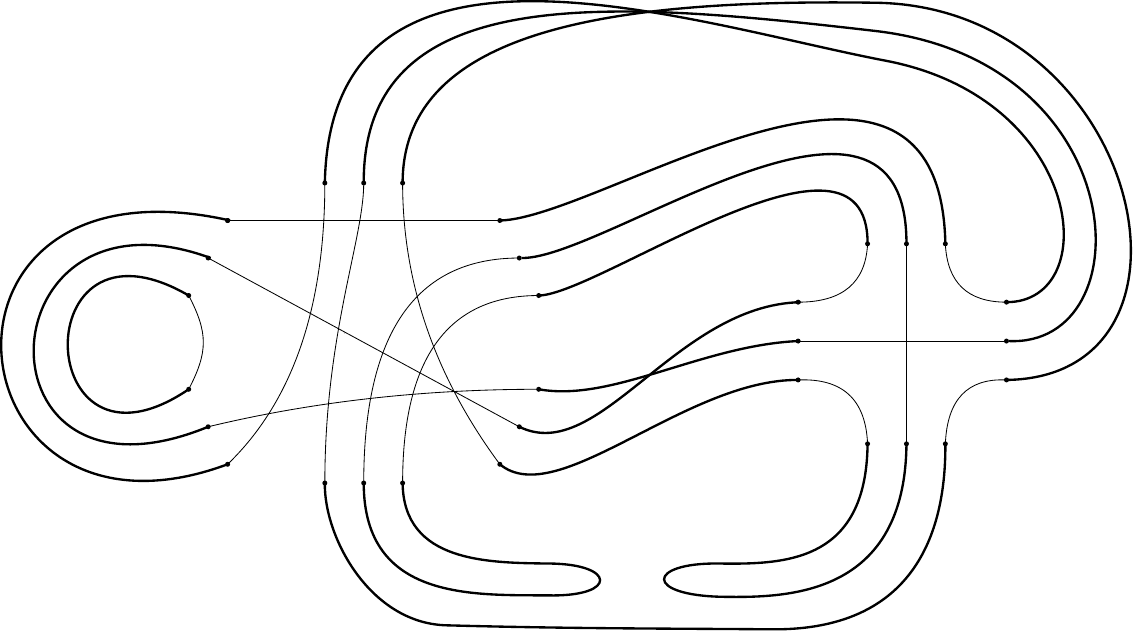}
    \caption{A 2-colored stranded graph for $D=3$, obtained by connecting the two stranded graphs in Figure~\ref{fig:stranded} by propagator edges (elongated Brauer diagrams). The color 0 is represented by thick lines and color 1 by thin lines.}
    \label{fig:2colorstranded}
\end{figure}
\begin{proof}
Applying Wick's theorem \eqref{eq:wick} to the formula of an invariant of order $2p$ specified by a directed stranded graph $\vec{\cS}$ \eqref{eq: invariant}, leads to the following form of the Gaußian expectation:  
\begin{equation}
    \begin{split}
    \langle I_{\cS} \rangle_0 &= \big\langle \prod_{(i,j) \in \vec{M}_{ref}} T^{a^i_\mathcal{D}} T^{a^j_\mathcal{D}} \big\rangle_0 \epsilon (\vec{M}^D_{ref},\vec{E}(\vec{\mathcal{S}}))^{\fb} \left(\prod_{(k,l) \in \vec{E}(\vec{\mathcal{S}})} g^{\fb}_{\, k  l} \right)\, \\
    &= \sum_{M_0 \in \mathbf{M}_{2p}} \epsilon (\vec{M}^D_{ref},\vec{E}(\vec{\mathcal{S}}))^{\fb} \epsilon( \vM_{ref},\vM_0)^{\fb D} \Big( \prod_{(i,j) \in \vM_0} C^{a^i_\cD a^j_\cD} \Big) \left(\prod_{(k,l) \in \vec{E}(\vec{\mathcal{S}})} g^{\fb}_{\, k  l} \right) \, .
    \end{split}
    \label{eq:Gauss_inv}
\end{equation}
First, the dependence on the reference pairing can be eliminated using the properties \eqref{eq:prop3} and \eqref{eq:prop2} of the sign $\epsilon$ such that:
\begin{equation}
    \epsilon (\vec{M}^D_{ref},\vec{E}(\vec{\mathcal{S}}))^{\fb} \epsilon( \vM_{ref},\vM_0)^{\fb D} =  \epsilon(\vM_0^D, \vec{E}(\vec{\mathcal{S}}))^{\fb} \, ,
    \label{eq:sign_reform}
\end{equation}
where $\vM_0^D$ is the oriented pairing given by the disjoint union of $D$ copies of $\vM_0$. This pairing can be seen as taking each pairs of tensors in $\vM_0$ and pairing their indices, respecting their position. An illustration can be found in Figure~\ref{fig:M_0D}.

\begin{figure}
    \centering
    \includegraphics[scale=1.0]{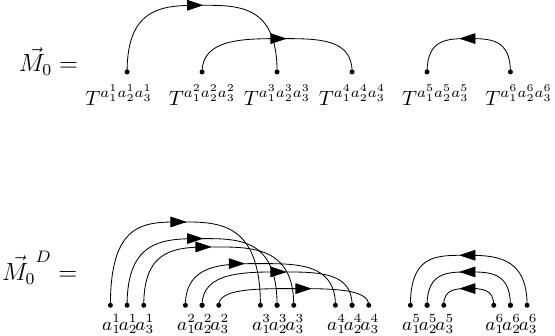}
    \caption{The pairing $\vM_0$ of tensors is promoted to the pairing $\vM_0^D$ of their indices.}
   \label{fig:M_0D}
\end{figure}

Second, we rewrite the term $\Big( \prod_{(i,j) \in \vM_0} C^{a^i_\cD a^j_\cD} \Big)$ using \eqref{eq: propagator} as:
\begin{equation}
    \begin{split}
     \prod_{(i,j) \in \vM_0} C^{a^i_\cD a^j_\cD} &= \prod_{(i,j) \in \vM_0} \left( \sum_{M_{ij} \in  \mathbf{M}\{a^{i}_\cD a^j_\cD \}} \gamma_{M} \epsilon(\vec{M}_{ij}, \vec{M}_{ref,ij,C})^{\fb} \prod_{(m,n) \in \vec{M}_{ij}} g_{\fb}^{\, m n} \right) \\
     &= \sum_{M_{tot} \in \mathbf{M}_{tot}} \gamma_{M_{tot}} \epsilon(\vM_{tot},\vM_0^D)^{\fb} \left(  \prod_{(m,n)\in \vM_{tot}} g_{\fb}^{\, m n} \right) \;,
    \end{split}
    \label{eq:Prop_reform}
\end{equation}
where $\mathbf{M}\{a^{i}_\cD a^j_\cD \}$ is the set of pairings of elements $a^{i}_\cD\cup a^j_\cD $, and $\mathbf{M}_{tot}$ is the set of pairings given by the disjoint union of all $\mathbf{M}\{a^{i}_\cD a^j_\cD \}$ with $(i,j) \in \vM_0$.
\begin{figure}
    \centering
    \includegraphics[scale=1.0]{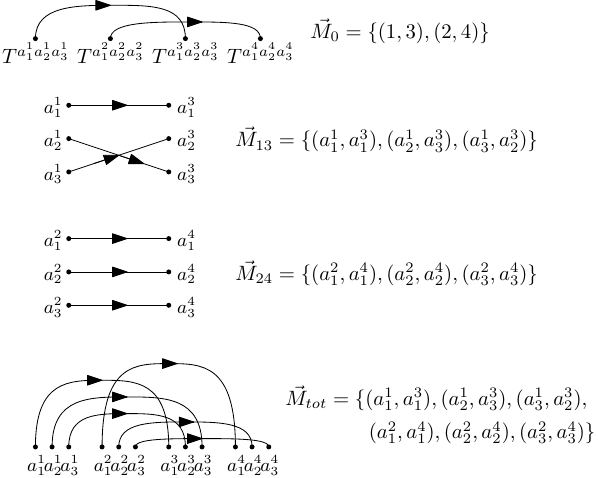}
    \caption{The construction of a pairing $\vM_{tot}$, from a pairing $\vM_0$ of four tensors of rank 3, as well as two pairings of indices $\vM_{13}$ and $\vM_{24}$. In spirit, these describe propagators connecting the different tensors.}
    \label{fig:M_tot}
\end{figure}
A pairing $M_{tot}\in\mathbf{M}_{tot}$ is therefore the disjoint union of $p$ pairings belonging to the sets $\mathbf{M}\{a^{i}_\cD a^j_\cD \}$. An example is shown in Figure~\ref{fig:M_tot}. Denoting these $p$ pairings as $M^1  \ldots M^p$, the factor $\gamma_{M_{tot}}$ is equal to:
\begin{equation}
    \gamma_{M_{tot}}= \prod_{x=1}^{p} \gamma_{M^{x}} \, .
\end{equation}
%Finally, 
We used here the fact that, by construction, the disjoint union of the $M_{ref,ij,C}$ is equal to $\vM_0^D$. %Because 
This comes from the fact that both contract the indices of a pair of tensors present in $\vM_0$, respecting the position of indices (see Figure~\ref{fig:M_refP}).

\begin{figure}
    \centering
    \includegraphics[scale=1.0]{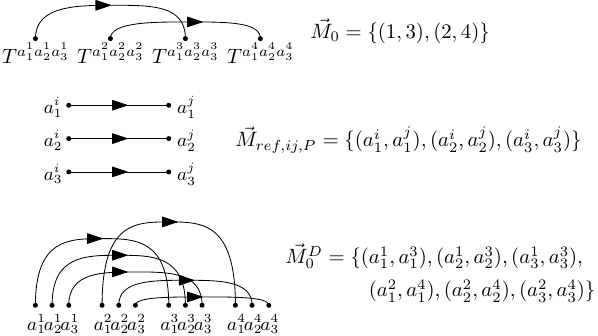}
    \caption{Illustration of the correspondence between $\vM_0^D$ and $\vM_{ref,ij,C}$.}
    \label{fig:M_refP}
\end{figure}

%Plugging Eqs.
Inserting \eqref{eq:sign_reform} and \eqref{eq:Prop_reform} in \eqref{eq:Gauss_inv}, we obtain:
\begin{equation}
    \begin{split}
    \langle I_{\vec{\mathcal{S}}} \rangle_0 &= \sum_{\substack{ M_0 \in \mathbf{M}_{2p} \\ M_{tot} \in \mathbf{M}_{tot}}}  \gamma_{M_{tot}} \epsilon(\vM_0^D, \vec{E}(\vec{\mathcal{S}}))^{\fb} \epsilon(\vM_{tot},\vM_0^D)^{\fb} \left(  \prod_{(m,n)\in \vM_{tot}} g_{\fb}^{\, m n} \right) \left(\prod_{(k,l) \in \vec{E}(\vec{\mathcal{S}})} g^{\fb}_{\, k  l} \right) \; .
    \end{split}
    \label{eq}
\end{equation}
This further leads to using property \eqref{eq:prop2}:
\begin{equation}
   \langle I_{\cS} \rangle_0 = \sum_{\substack{ M_0 \in \mathbf{M}_{2p} \\ M_{tot} \in \mathbf{M}_{tot}}}  \gamma_{M_{tot}} \epsilon(\vM_{tot}, \vec{E}(\vec{\mathcal{S}}))^{\fb} \left(  \prod_{(m,n)\in \vM_{tot}} g_{\fb}^{\, m n} \right) \left(\prod_{(k,l) \in \vec{E}(\vec{\mathcal{S}})} g^{\fb}_{\, k  l} \right) \, . 
\end{equation}

Adding oriented edges of a new color $0$ to $\mathcal{S}$, according to $\vM_{tot}$, yields a $2$-color directed stranded graph $\vcG$. We define a face in $\cG$ as a cycle with strands of alternating colors. Along a face, $g_{\fb}$ and its inverse alternate and all indices are summed. Therefore, each face contributes a factor $N$. However, because of the graded symmetry $g^{\fb}_{\, a c}= (-1)^{\fb} g^{b}_{\, ca}$ a face also picks up a factor $(-1)^{\fb}$ if an odd number of strands point in one of the two directions around the face, we characterize such a face to be \textit{odd}, otherwise a face is called \textit{even}. The term $\left(  \prod_{(m,n)\in \vM_{tot}} g_{\fb}^{\, m n} \right) \left(\prod_{(k,l) \in \vec{E}(\vec{\mathcal{S}})} g^{\fb}_{\, k  l} \right)$ thus contributes:
\begin{equation}
  \left(  \prod_{(m,n)\in \vM_{tot}} g_{\fb}^{\, m n} \right) \left(\prod_{(k,l) \in \vec{E}(\vec{\mathcal{S}})} g^{\fb}_{\, k  l} \right) = (-1)^{\fb F_{odd}(\vcG)} N^{F(\cG)}  
\end{equation}
Using property \eqref{eq:prop4} we also rewrite the term $\epsilon(\vM_{tot}, \vec{E}(\vec{\mathcal{S}}))^{\fb}$ as:
\begin{equation}
    \epsilon(\vM_{tot}, \vec{E}(\vec{\mathcal{S}}))^{\fb} = (-1)^{F_{even}(\vcG) b} \, ,
\end{equation}
where $F_{even}(\vcG)$, resp.~$F_{odd}(\vcG)$, denotes the number of even, resp.~odd, faces of $\vcG$ and $F(\cG) = F_{odd}(\vcG) + F_{even}(\vcG)$ is the total number of faces of $\cG$, which does not depend on any chosen orientation.

The expectation value $\langle I_{\cS} \rangle_0$ can thus be evaluated as a sum over $2$-colored stranded graphs $\cG$:
\begin{equation}
    \langle I_{\cS} \rangle_0 = \sum_{\substack{\cG,\; \cS \subset \cG \\ |V(\cG)|=2pD} } \gamma_{M_{tot}} \left((-1)^{\fb} N \right)^{F(\cG)} \;.
\end{equation}
This concludes the proof.
\end{proof}

Each term in \eqref{eq: lemma} is invariant under the transformation:
\begin{equation}
    \label{dualitate}
    \fb \to \fb+1 \mbox{ and } N \to -N\; .
\end{equation}
Thus, this transformation does not affect the Gaußian expectation value of any invariant nor the amplitude of its graphs and is hence a duality of our model.
 %Because of this property, we refer to this transformation as a duality.

The invariance of the partition function under the duality follows directly from the above statement, using a perturbative expansion of the interaction part of the action:
\begin{equation}
    \begin{split}
    Z &=  \left[e^{\partial_T \boldsymbol{C} \partial_T} \sum_{\{p_\cS \geq 0\}} \prod_{\cS} \frac{1}{p_\cS!} \Big( \frac{\lambda_\cS}{|V(\cS)|/D} I_{\cS}(T)  \Big)^{p_\cS}\right]_{T=0} \\
    &=  \sum_{\{p_\cS \geq 0\}} \prod_{\cS} \frac{1}{p_\cS!} \Big( \frac{\lambda_\cS}{|V(\cS)|/D} \Big)^{p_\cS} \big\langle \prod_{\cS} I_{\cS}(T)^{p_\cS}  \big\rangle_0 \, .
    \end{split}
\end{equation}
Since any product of invariants is a single disconnected invariant, the factor $\big\langle \prod_{\cS} I_{\cS}(T)^{p_\cS}  \big\rangle_0$ is invariant under the duality \eqref{dualitate}. Hence the partition function of the model is invariant under \eqref{dualitate}.

As usually, expectation values of invariants are calculated by taking derivatives of $\ln Z$ with respect to the couplings $\lambda_\cS$ (see, for example, \cite{gurau, kepplermuller23}). Diagrammatically, the derivative marks a $1$-colored stranded subgraph of type $\cS$ and this leads to the conclusion stating that the expectation value of an invariant can be experesssed a s a formal sum over $2-$colored stranded graphs (see again \cite{kepplermuller23}).

\section{Illustration: totally symmetric and antisymmetric tensor models}\label{sec: example}

In this section, we exhibit the general duality result proved in the previous section for the particular case of totally symmetric and antisymmetric tensor models.

\subsection{$O(N)$ tensor models}

The vector space $V$ is, in this case, an ordinary even (bosonic) $N$-dimensional real vector space and the tensor product space $V^{\otimes D}$ is an even vector space $\forall D\in\mathbb{N}$. 
As already explained above, the grading parameter now takes the value $\fb=0$.

To the $GL(N)$ representation of totally symmetric tensors $\mathrm{Sym}^D(V)$ of order $D$ is associated the following Young diagram:
\begin{equation}\lambda_{S}=\underbrace{
\begin{ytableau}
\ & \ & & \none[\dots] & \ 
\end{ytableau}}_{\text{length}\ D} \;.
\end{equation}
The corresponding Young symmetrizer is: $c_S=\sum_{\sigma\in\fS_D} \sigma$. 
%We cite t
The projector on the $O(N)$ representation of traceless symmetric tensors is (see \cite[eq.~(4.21) and Prop.~4.2]{braueralg}):
\begin{equation}
    P_{D,N}^{(\lambda_S)} = \prod_{f=1}^{\lfloor\tfrac D2\rfloor} \Big( 1- \frac{A_D}{(N+2(D-f-1))f}  \Big) \;.
\end{equation}
This
is a restricted version 
%\st{particular case} [H: It is not only a particular case, but it is the universal Projector, restricted to a subspace.]} 
of the universal traceless projector \eqref{eq: tracelessproj} and it removes the trace modes after restriction to symmetrized tensors.
As an element of the Brauer algebra $B_D(N)$, a propagator ($\boldsymbol{C}$ in Def.~\ref{def: model}) of a symmetric $O(N)$ tensor model is proportional to the projector:
\begin{equation}\label{eq: symmO}
    P_{D,N}^{(\lambda_S)} \frac{c_S}{D!} \;.
\end{equation}
The Brauer algebra acts on tensors by permuting and contracting their indices (see again subsection~\ref{sec: irreps}). 

To the $GL(N)$ representation of totally antisymmetric tensors $\bigwedge^D(V)$ of order $D$ is associated the Young diagram:
\begin{equation}\lambda_{\wedge}=\left.
\begin{ytableau}
\ \\ \\ \\ \none[\raisebox{-.3ex}{\vdots}] \\ \
\end{ytableau}\right\}{\text{\scriptsize length $D$}} \;.
\end{equation}
The corresponding Young symmetrizer is: $c_\wedge=\sum_{\sigma\in\fS_D} \sgn(\sigma) \sigma$.
A totally antisymmetric $O(N)$ tensor is automatically traceless, i.e.~$\bigwedge^D(V)$ is already an irreducible $O(N)$ representation.
Thus a propagator of an antisymmetric $O(N)$ tensor model, as an element of $B_D(N)$, is proportional to the projector:
\begin{equation}\label{eq: antisymmO}
    \frac{c_\wedge}{D!} \;,
\end{equation}
that acts by antisymmetrizing all indices.

\subsection{$Sp(N)$ tensor models}

In this case, the $N$-dimensional vector space $V$ is an odd (fermionic) real super-vector space and order $D$ tensors are bosonic if $D$ is an even integer and fermionic if $D$ is odd. 
Therefore the grading parameter now takes the value $\fb=1$.

The representation of the dual tensor model with $Sp(N)$ symmetry is obtained by transposing the Young diagram: $\lambda_\wedge^\prime=\lambda_S$. 
Therefore, the dual model to the symmetric traceless $O(N)$ tensor model is the antisymmetric traceless $Sp(N)$ tensor model. The projector onto this representation is given by:
\begin{equation}
    P_{D,-N}^{(\lambda_S)} = \prod_{f=1}^{\lfloor\tfrac D2\rfloor} \Big( 1- \frac{A_D}{(N+2(D-f-1))f}  \Big) \;,
\end{equation}
seen as an element of $B_D(-N)$ that differs from \eqref{eq: symmO} by the sign of $N$. 
Recall the difference in the action of $\beta$ when $\fb=1$ instead of $\fb=0$: the action of $\beta$ on tensors differs by a factor $\eta(\beta)=(-1)^{m}$, where 
$m=\text{minimal number of crossings in }\beta$. If $\beta$ is a permutation, we have $\eta(\beta)=\sgn(\beta)$ and thus, the Young symmetrizer $c_S\in B_D(-N)$ acts by antisymmetrizing the indices of a tensor, whereas $c_S\in B_D(N)$ acts by symmetrization. 
%In short, for $T\in V^{\otimes D}$:
%\begin{equation}
%    c_\lambda\boldsymbol{\cdot}_{\fb=1} T = c_{\lambda^\prime} \boldsymbol{\cdot}_{\fb=0} T \;.
%\end{equation}

The dual model to the antisymmetric $O(N)$ tensor model contains tensors transforming in the symmetric representation of $Sp(N)$. Note that these tensors are also automatically traceless:
\begin{equation}
   \omega_{a_ia_j}  T^{a_1\dots a_i\dots a_j \dots a_D} =0 \;,
\end{equation}
because of the antisymmetry of the symplectic form. Thus, the projector is equal to \ref{eq: antisymmO}, but regarded as an element of $B_D(-N)$, and acts by symmetrizing the tensors. 
The diagrammatic (Feynman type) expansions of a  model and its dual contain exactly the same stranded graphs, but %whereas 
the amplitude of a stranded graph picks up a factor $N$ for each face in the $O(N)$ models, in the $Sp(N)$ models, each face contributes a factor $-N$. 
For example, the graph in Figure~\ref{fig:2colorstranded} has three faces and thus %would 
contributes as $N^3$ in an $O(N)$ model, but $-N^3$ in an $Sp(N)$ model.

%%%%%%%%%%%%%%% Appendix %%%%%%%%%%%%
\appendix

\section{Projector of tensors with irreducible symmetry}
\label{apendice}

Let us now consider an irreducible representation of $O(N)$ or $Sp(N)$ given by the Young tableau $\lambda$. The projector on this space of tensors is given by the product of the Young symmetrizer $c_\lambda$ with the traceless projector $\mathfrak{P}_D$:
\begin{equation}
\begin{split}
   P_{R}^{a_1 \ldots a_D a_{D+1} \ldots a_{2D}} &= (c_\lambda \cdot \mathfrak{P}_D)^{a_1 \ldots a_D a_{D+1}  \ldots a_{2D} } \\
   &=  \sum_{\substack{\alpha\text{ non-zero} \\ \text{eigenvalue of }A_D \\ \sigma \in P_\lambda , \tau \in Q_\lambda}}  \left(\sgn(\tau) \sigma \cdot \tau + \frac{-\sgn(\tau)}{\alpha}  \sigma \cdot \tau \cdot A_D\right)^{a_1\ldots a_D a_{D+1} \ldots a_{2D}}  
\end{split}
\end{equation}
The products of the elements $\sigma$, $\tau$ and $A_D$ of the Brauer algebra lead to $\phi$ and $\chi$, which are elements of $B_D((-1)^{\fb} N)$. We thus rewrite the terms present in $P_{R}$ as
\begin{equation}
\begin{split}
    \sum_{\substack{\alpha\text{ non-zero} \\ \text{eigenvalue of }A_D \\ \sigma \in P_\lambda \, , \tau \in Q_\lambda}} \sgn(\tau) \sigma \cdot \tau &= \sum_{\phi \in P_\lambda \cdot Q_\lambda} \gamma_\phi  \phi \; ,\\
    \sum_{\substack{\alpha\text{ non-zero} \\ \text{eigenvalue of }A_D \\ \sigma\in P_\lambda \, , \tau \in Q_\lambda}} \frac{-\sgn(\tau)}{\alpha}  \sigma \cdot \tau \cdot A_D &= \sum_{\chi \in P_\lambda \cdot Q_\lambda \cdot \mathbb{B}} \gamma_\chi \chi \; \text{,}
\end{split}
\end{equation}
where $\mathbb{B}$ is the set of elements $\beta_{ij}$ of the Brauer algebra (see \eqref{eq: betaij}), and $\gamma_\phi$ and resp. $\gamma_\chi$ are factors taking into account the fact that different products of $\sigma$ and $\tau$ may lead to the same $\phi$. We use then expression \eqref{eq:pairing_brauer} to write:

%where $\mathbb{B}$ is the set of elements $\beta_{ij}$ \eqref{eq: betaij} of the Brauer algebra and $p_\omega$/$p_\delta$ are factors taking into account the fact that different products of $\sigma$ and $\tau$ may lead to the same $\omega$. Let us denote by $\gamma_\delta$ and $\gamma_{\omega}$ the factors $\left( p_\delta \sum_{\substack{\alpha}} \frac{-\sgn(g')}{\alpha} \right)$ and $\left( p_\tau \sum_{\substack{\alpha} } \sgn(\tau)\right)$. We use then expression \eqref{eq:pairing_brauer} to write:
\begin{equation}
    P_{R}^{a_1 \ldots a_D a_{D+1} \ldots a_{2D}} = \sum_{\vec{\delta} \in \mathbb{M}_\delta} \gamma_\delta  \epsilon(\vec{\delta},\vM_{ref})^{\fb}
    \prod_{(i,j) \in \vec{\delta}} g^{a_i a_j} + \sum_{\vec{\omega} \in \mathbb{M}_\omega} \gamma_\tau \epsilon(\vec{\omega},\vM_{ref})^{\fb}
    \prod_{(i,j) \in \vec{\omega}} g^{a_i a_j} \, ,
\end{equation}
where the sum over the elements $\tau$ and $\delta$ of $B_D((-1)^{\fb}N)$ is replaced by a sum over their associated oriented pairings $\vec{\tau}$ and $\vec{\delta}$.

Thus, the projector $P_R$ is shown to be a particular case of the general projector \eqref{eq: propagator}.

\bigskip

\paragraph*{Acknowledgements.} 
The authors warmly acknowledge R\u azvan Gur\u au for useful discussions at various
stages of this research project.
T.~K., T.~M.~and A.~T.~have been partially supported by the ANR-20-CE48-0018 ``3DMaps'' grant and by the PHC Procope program "Combinatorics of random tensors". 
A.~T.~has been partially supported by the PN
23210101/2023 grant. H.~K.~has been supported by the Deutsche Forschungsgemeinschaft (DFG, German Research Foundation) under Germany's Excellence Strategy EXC--2181/1 -- 390900948 (the Heidelberg STRUCTURES Cluster of Excellence), and his mobilities were partially supported in the form of PPP France (DAAD). The authors further acknowledge support from the Institut Henri Poincaré (UAR 839 CNRS-Sorbonne Université), and LabEx CARMIN (ANR-10-LABX-59-01), where this project initiated, during the ``Quantum gravity, random geometry and holography'' trimester.

%%%%%%%% References %%%%%%%%%%%%%%%%%%%%%%%%%%%%%%%%%%%
{\pagestyle{plain}
	\bibliography{references} 
	\addcontentsline{toc}{section}{References}
}

%%%%%%%%%%%%%%%%%%%%%%%%%%% END %%%%%%%%%%%%%%%%%%%%%%%%%%%

\end{document}